\newtheorem{proposition}{Proposition}
\newtheorem{remark}{A/I Remark}[]
\newcommand{\ind}{\rotatebox[origin=c]{90}{$\models$}}
\title{Inference on the structure of gene regulatory networks}
\author[1,2,*]{Yue Wang}
\author[3]{Zikun Wang}
\affil[1]{Department of Computational Medicine, University of California Los Angeles, Los Angeles, California, United States of America}
\affil[2]{Institut des Hautes \'Etudes Scientifiques (IH\'ES), Bures-sur-Yvette, Essonne, France}
\affil[3]{Laboratory of Genetics, The Rockefeller University, New York, New York, United States of America}
\affil[*]{Corresponding author. E-mail address: yuew@g.ucla.edu (Y. W.). ORCID: 0000-0001-5918-7525}
\date{}                                           
\begin{document}
\maketitle

\begin{abstract}
In this paper, we conduct theoretical analyses on inferring the structure of gene regulatory networks. Depending on the experimental method and data type, the inference problem is classified into 20 different scenarios. For each scenario, we discuss the problem that with enough data, under what assumptions, what can be inferred about the structure. For scenarios that have been covered in the literature, we provide a brief review. For scenarios that have not been covered in literature, if the structure can be inferred, we propose new mathematical inference methods and evaluate them on simulated data. Otherwise, we prove that the structure cannot be inferred. 
\end{abstract}

\smallskip
\noindent \textbf{Keywords.} 

\noindent Inference; Gene regulatory network; Independence; Differential equation.

\

\noindent \textbf{Frequently used abbreviations:} 

\noindent GRN: gene regulatory network.

\noindent DAG: directed acyclic graph.

\noindent ODE: ordinary differential equation

\section{Introduction}
\label{intro}
In living cells, most genes are transcribed to mRNAs and then translated to proteins. Some proteins serve as transcription factors that regulate the transcription of other gene(s). For example, in fruit flies, CLOCK (CLK) and CYCLE (CYC) are transcription factors that activate the transcription of \textit{period} (\textit{per}) and \textit{timeless} (\textit{tim}), the protein products of which in return inhibit CLK/CYC. Such transcription-translation negative feedback loop plays a central role in circadian rhythm \cite{patke2020molecular}. Small molecules also participate in gene regulation. Retinoic acid inhibits Fgf8 in mice \cite{wang2020biological}. For \textit{E. coli}, LacI protein inhibits the expression of \emph{lac} operon, unless lactose level is high and glucose level is low \cite{muller2013lac}. The regulation of gene expression is a central topic in biology.

Broadly speaking, a gene regulatory network (GRN) consists of various molecular regulators that govern gene expression, namely, levels of mRNAs and corresponding proteins \cite{cunningham2015mechanisms}. Such regulators interact with each other and form a complicated network. See Fig. \ref{grnex} for an example of GRN in \textit{E. coli}. Here the expression of one gene can activate or inhibit the expression of another gene. In this paper, the major focus is to determine the regulatory relationship between genes. Regulations involving non-genetic factors, or even other regulation networks (not necessarily biological), can be determined with similar methods.

\begin{figure}
	\center
	$\xymatrix{
		rpoD\ar[rr]\ar@/^/[rrdd]\ar@/^/[dd]&&rpoS\ar[dd]\\
		&&\\
		umuDC\ar@/^/@{-->}[uu]\ar@/^/@{-->}[rr]&&recF\ar@/^/[ll]\ar@/^/[lluu]
	}$\\
	\caption{An example of GRN in \textit{E. coli} \cite{bansal2006inference}. Each vertex is a gene (some genes are omitted). A solid arrow means activation, and a dashed arrow means inhibition.}
	\label{grnex}
\end{figure}
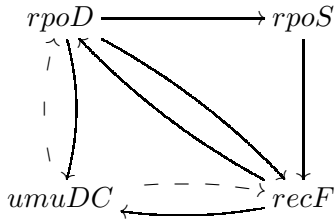

Most subjects of a GRN are large molecules confined within living cells. Therefore, it is difficult or even impossible to directly verify a regulatory network among a group of genes with biochemical methods. On the other hand, a large amount of data that are related to GRNs have been collected and documented, such as single-cell observational gene expression data \cite{cao2020human} and bulk level interventional phenotype data \cite{galluzzi2012prognostic}. The central question of this paper is to infer the GRN structure using experimental data (``reverse-engineering''). We aim at building a complete framework and fill in the blanks, not reviewing all important existing methods. Besides gene regulation, the problem of inferring network structure with experimental data is extensively studied in other fields \cite{brugere2018network,breza2020using,mateos2019connecting}. We consider two major situations:

(1) Consider genes $V_1,\ldots,V_n$. We can measure the expression levels of these genes (mRNA or protein count). The goal is to reconstruct the GRN that contains a specific gene, such as $V_1$. This means that we need to exclude irrelevant genes that do not belong to this GRN. Besides, for any two genes $V_i,V_j$ that belong to this GRN, we need to study whether the expression of $V_i$ directly regulates the expression of $V_j$. If yes, we determine whether the regulation is activation or inhibition.

(2) Consider genes $V_1,\ldots,V_n$ and a phenotype $V_0$ (growth rate etc.). Some genes can affect the level of this phenotype, and we assume this phenotype does not affect the expression of genes. We can measure the level of this phenotype, but not the expression levels of genes. Nevertheless, we can intervene with any genes, so as to change (decrease in general) their expression levels. By adding intervention on each gene $V_i$ and observe the phenotype $V_0$, we can exclude irrelevant genes. Thus we can stipulate that each $V_i$ can directly or indirectly affect $V_0$. Similarly, the goal is to reconstruct the GRN that contains $V_0$. We study whether the expression of $V_i$ activates or inhibits the expression of $V_j$.

Various data types can be used to infer GRN structures. Different data types require different mathematical inference tools. For certain data types, there have been numerous studies in determining GRN structures \cite{bansal2006inference,penfold2011infer,emmert2012statistical,wang2019comparative,hurley2012gene,xiong2004identification,yu2004advances,shmulevich2002gene}. We propose a unified framework to treat the GRN inference problem. Depending on possible data types that can be used to infer the GRN structure, we classify the GRN inference problem into 20 scenarios. For each scenario, we discuss what can be inferred about the GRN structure, and what assumptions are required. For scenarios that have been extensively studied, we only introduce a few representatives to show what we can infer in these scenarios. Readers may refer to some review papers for more details about these scenarios \cite{penfold2011infer,emmert2012statistical,huynh2019gene}. Other scenarios will be thoroughly discussed: whether and how the GRN can be inferred. We also develop novel mathematical methods if necessary. Since such less-studied scenarios lack real data, we implement our novel methods and evaluate on simulated data. For identified regulatory relations, we determine whether they are activation or inhibition. Through this analysis, the existing mathematical methods, our novel methods, and potential methods that will be invented in the future can be treated in the same unified paradigm. This paradigm can be generalized to fit new data types.

There are many subtleties in gene regulation that cannot be fully considered. We need to warn that all inference methods need some (possibly unrealistic) assumptions in mathematics and biology to (over)simplify the problem. Therefore, the inference results should be regarded as informative findings, not ground truth.

This paper is not a review of all important existing GRN inference methods. For example, machine learning methods like the GENIE3 algorithm \cite{huynh2010inferring}, and Bayesian type methods which could easily utilize prior information, are not covered. Our methods are mostly deterministic, meaning that they produce the same inference for fixed data. Besides, this paper mainly focuses on theoretical results, not practical algorithms.

Fig. \ref{diag} describes the dependence relations of all 10 sections and the appendix. Section \ref{biol} briefly introduces possible biological data types regarding GRNs and related biological assumptions. Section \ref{cond} discusses mathematical setup and related mathematical assumptions. Section \ref{main} presents the main results of this paper: with what data, under what assumptions, what we can infer about the GRN structures. Section \ref{concern} discusses biological background about GRN and related measurements, and concerns about related assumptions. Section \ref{existing} contains some classical mathematical results related to GRN structure inference. In Section \ref{novel}, we develop some novel mathematical results that are used in the inference of GRN structures. In Section \ref{imp}, our novel methods are implemented and evaluated on simulated data. Section \ref{scenarios} provides a detailed explanation for the results in Section \ref{main}. We finish with some general discussions in Section \ref{disc}. Appendix \ref{app} contains simulation details for Section \ref{imp}.

This paper is written to draw attention from readers with various backgrounds. Readers who only want to understand the main results could stop at Section \ref{main}. Readers who are not interested in mathematical details could skip Subsections \ref{S31}-\ref{S33}, \ref{ca}, and Sections \ref{existing}-\ref{disc}. Readers who are not interested in biological discussions could skip Section \ref{concern}. 

\begin{figure}
	\center
	$\xymatrix{
		\mathsection2\ar[d]\ar[dr]&\mathsection1\ar[l]\ar[r]\ar[r]&\mathsection3\ar[dll]\ar[d]\ar[ld]\ar[r]&\mathsection6\ar[ldd]\ar[ld]\\
		\mathsection5\ar[rd]&\mathsection4\ar[rd]&\mathsection7\ar[d]\ar[r]&\mathsection8\ar[d]\ar[ld]\\
		&\mathsection{10}&\mathsection9\ar[l]& \text{App}
	}$\\
	\caption{Dependence relations of all 10 sections and the appendix (App) in this paper.}
	\label{diag}
\end{figure}
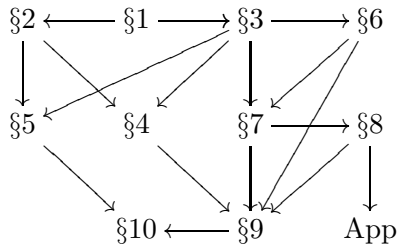

\section{Biological setup}
\label{biol}
\subsection{Experimental data types}
\label{edt}
Consider genes $V_1,\ldots,V_n$. Various types of experimental data can be used to infer GRN structures. They can be classified in four major dimensions: (1) Modality. We can directly measure the expression levels of $V_1,\ldots,V_n$, or only a phenotype $V_0$ that depends on the expression levels of $V_1,\ldots,V_n$. (2) Granularity. We can measure the levels of $V_1,\ldots,V_n$ or $V_0$ for a single cell and repeat many times, so as to obtain a group of random variables $X_1,\ldots,X_n$ or $X_0$ that represent the random values of $V_1,\ldots,V_n$ or $V_0$. We also obtain the corresponding probability distribution of these variables. Alternatively, we can measure these quantities over a large population of cells (bulk level), so that the randomness is averaged out. Then we obtain deterministic results of $X_1,\ldots,X_n$ or $X_0$, denoted by $x_1,\ldots,x_n$ or $x_0$. (3) Intervention. We can measure certain quantities after interfering with the expression of some genes, so that these genes and downstream genes are affected (not fully muted in general). Then we can compare measured quantities before intervention ($x_1,\ldots,x_n$) and quantities after intervention ($x_1',\ldots,x_n'$). We can also observe without any intervention. (4) Temporal resolution. We can measure at a single time point, $x_i(0)$, or measure at multiple time points as a time series, $x_i(0),x_i(1),x_i(2),$ and so on. Two time points are enough for certain inference methods, while some other methods need many more time points. (\textasteriskcentered) When we measure at single-cell level at multiple time points, there is an extra dimension: Can we measure the same cell multiple times, so that we can obtain the joint distribution for multiple time points, $\mathbb{P}[X_1(0)=c_0,X_1(1)=c_1,X_1(2)=c_2]$, or we need to measure different cells at different time points, so that we just have multiple marginal distributions for each time point, $\mathbb{P}[X_1(0)=c_0],\mathbb{P}[X_1(1)=c_1],\mathbb{P}[X_1(2)=c_2]$? This difference has a thermodynamic explanation: without joint distribution, we cannot distinguish equilibrium steady state and non-equilibrium steady state \cite{wang2020mathematical}. 

For these four major dimensions, we have $2^4=16$ different scenarios. Each scenario is named with a combination of four labels: (1) Gene expression or Phenotype; (2) Single-cell or Bulk; (3) Non-interventional or Interventional; (4) One-time or Time series. In four scenarios (Single-cell + Time series), there is an extra dimension of Joint distribution or Marginal distribution, meaning \textbf{a total of 20 scenarios}. For these 20 scenarios, we study how to infer the GRN structure, and the results are summarized in Table \ref{Tab}.

For each dimension, there is one type that is strictly more informative than the other: Gene expression $>$ Phenotype, Single-cell $>$ Bulk, Interventional $>$ Non-interventional, Time series $>$ One-time (the more time points, the more informative), Joint distribution $>$ Marginal distribution. Nevertheless, for more informative data types, generally the experiments are more difficult, more expensive, and less accurate.

There are intermediate states for each dimension. (1) When we can only measure the expression level of one gene (or a few genes), we can regard the measured gene as a phenotype, and apply methods developed for phenotype data. (2) When we increase the number of cells measured in one experiment, due to the central limit theorem \cite{durrett2019probability}, the data variance decreases fast. When the cell number is small enough, so that the stochasticity is significant, we can treat this case the same as the single-cell case. (3) When we can only add interventions on limited genes, some inference methods for interventional data might fail, and we need to treat such data as non-interventional. (4) For time series data, some methods, such as the one in Subsection \ref{GSP}, only require measuring at two time points; some other methods, such as those in Subsection \ref{ODE} and Subsection \ref{partial}, require measuring at many more time points, proportional to the number of genes involved. Therefore, when we have data at a few time points, we might need to use methods for one or two time points.

\subsection{Biological assumptions}
\label{biola}
Gene regulation in reality is complicated. To infer the GRN structure with limited experimental data, we need some biological assumptions.

(A) Besides genes, other non-genetic factors also regulate gene expression (retinoic acid for Fgf8). We assume that such non-genetic factors are kept at constants, so that they do not interfere with determining the regulation between genes.

(B) The regulation between genes might require specific conditions (lactose and glucose for Lacl and \textit{lac} operon). We assume that all such conditions are satisfied, so that we can observe all possible regulations among considered genes.

(C) We assume all genes in the GRN that we want to reconstruct have been considered, and assume that the set of considered genes is not too large. Therefore, the set of genes considered consists of all genes in a small GRN and at most a few irrelevant genes.

(D) If $V_i$ regulates $V_j$, we assume the increase of $V_i$ always leads to the increase (activation) or decrease (inhibition) of $V_j$. We exclude the case that $V_i$ can both activate and inhibit $V_j$ in different situations, although it is not always true \cite{liu2011mechanism}.

With these assumptions or first approximations that are not totally unrealistic, the GRN structure inference problem can be transformed into a well-defined mathematical problem. Nevertheless, these assumptions are not realistic in biology, and they will be further discussed in Subsection \ref{bc}.

\section{Mathematical background and setup}
\label{cond}
\subsection{Introduction of related mathematical assumptions}
\label{S30}
In the following, we introduce some necessary mathematics for understanding the problem of inferring GRN structures. We also introduce four mathematical assumptions that will be used in certain scenarios. Under these assumptions, the underlying GRN is simple enough, or the experimental data are regular enough, so that they follow certain mathematical models. Nevertheless, all assumptions are more or less unrealistic in biology, especially the assumption that the GRN has no directed cycle. We will discuss concerns about these assumptions in Subsection \ref{ca}.

Four assumptions are used in certain scenarios: (1) \textbf{Path Blocking (PB)}: the intervention on one gene has no effect on another gene (or a phenotype), if and only if other intervened genes have already blocked all paths; (2) \textbf{Directed Acyclic Graph (DAG)}: the GRN can be described by a directed graph without cycles; (3) \textbf{Markov and Faithful (MF)}: the distribution of gene expression properly reflects the underlying DAG through conditional independence relations; (4) \textbf{Linear System (LS)}: the gene expression (and possibly phenotype) time series data satisfy a linear ODE system. Most contents of this section, except the path blocking property, are from standard textbooks \cite{G173,PMJ,Murray}. If not interested in mathematical details, readers can remember the names of these assumptions and skip the rest of this section.

\subsection{Graph theory and path blocking property}
\label{S31}
In graph theory, a \textbf{directed graph} consists of several vertices $\{V_1,\ldots,V_n\}$ and some directed edges between different vertices, such as $V_i\to V_j$. 

A GRN can be represented as a directed graph. Each vertex $V_i$ represents a gene, and each directed edge $V_i\to V_j$ means that the expression of gene $V_i$ can regulate the expression of gene $V_j$. We require that the graph is connected. If this graph has several connected components, we can deal with them separately.

If there are edges $V_i\to V_j,V_j\to V_k,\ldots,V_l\to V_m$, then $V_i\to V_j\to V_k\cdots\to V_l\to V_m$ is called a \textbf{path} from $V_i$ to $V_l$. Although the GRN might have cycles, we do not allow a path to pass a vertex more than once. The number of edges in a path is called the \textbf{length} of this path. 

If we can only measure a phenotype $V_0$, then the concerned GRN consists of $V_0$ and genes $V_1,\ldots,V_n$ that have paths leading to $V_0$. We assume no edge starts from $V_0$.

If there is an edge $V_i\to V_j$, then $V_i$ is a \textbf{parent} of $V_j$, and $V_j$ is a \textbf{child} of $V_i$. If there is a path from $V_i$ to $V_j$,  then $V_i$ is an \textbf{ancestor} of $V_j$, and $V_j$ is a \textbf{descendant} of $V_i$. 

Consider any two vertices $V_i,V_j$ and a set $\mathcal{S}\subset\{V_1,\ldots,V_n\}\backslash\{V_i,V_j\}$. If all paths from $V_i$ to $V_j$ need to pass through $\mathcal{S}$, we say that $\mathcal{S}$ \textbf{blocks} $V_i$ to $V_j$. Specifically, if there is no path from $V_i$ to $V_j$, then any set, including the empty set $\emptyset$, blocks $V_i$ to $V_j$.

Assume we can interfere with different sets of genes and measure the expression levels of all genes, so as to compare if they have the same effect on certain genes. The \textbf{path blocking property} means that for any $V_i,V_j$ and any set $\mathcal{S}\subset\{V_1,\ldots,V_n\}\backslash\{V_i,V_j\}$, after interfering with all genes in $\mathcal{S}$, intervention on $V_i$ cannot provide extra influence on the expression of $V_j$, if and only if $\mathcal{S}$ blocks $V_i$ to $V_j$. The path blocking property in this case means a set of genes that blocks $V_i$ to $V_j$ in the directed graph (GRN) also blocks the influence of $V_i$ to $V_j$ in gene expression, and vice versa.  

Assume we can interfere with different sets of genes and measure one phenotype $V_0$, so as to compare if they have the same effect on the measured phenotype. The \textbf{path blocking property} means that for any $V_i$ and any set $\mathcal{S}\subset\{V_1,\ldots,V_n\}\backslash\{V_i\}$, after interfering with all genes in $\mathcal{S}$, intervention on $V_i$ cannot provide extra influence on the expression of $V_0$, if and only if $\mathcal{S}$ blocks $V_i$ to $V_0$. 

\subsection{Directed acyclic graph, {Markov} property, and faithful property}
\label{S32}
A \textbf{directed acyclic graph} (DAG) is a directed graph that has no directed cycles. For example, if we have edges $V_1\to V_2$ and $V_2\to V_3$, then we cannot have edge $V_3\to V_1$, but edge $V_1\to V_3$ is allowed.

In a \textbf{causal DAG} (also called Bayesian network) $\mathcal{G}$ with $n$ vertices, each vertex $V_i$ has an associated random variable $X_i$, which represents the stochastic expression level of gene $V_i$ in a single cell. The joint distribution of these variables is denoted as $\mathbb{P}$. 


$\mathbb{P}$ is said to be \textbf{Markov} to $\mathcal{G}$, if $\mathbb{P}(X_1,\ldots,X_n)=\prod^{n}_{i=1}\mathbb{P}(X_i\mid \pi_{X_i})$, where $\pi_{X_i}$ means the expression levels of $V_i$'s parents. If all the conditional independence relations in $\mathbb{P}$ also appear in any other $\mathbb{P}'$ that is Markov to $\mathcal{G}$, then we say that $\mathbb{P}$ is \textbf{faithful} to $\mathcal{G}$. If a distribution $\mathbb{P}$ is Markov and faithful to $\mathcal{G}$, then the structure of $\mathcal{G}$ properly reflects the causal relations in $\mathbb{P}$.

Consider a DAG $\mathcal{G}$. A common model that fits this DAG is $X_i=f_i(\pi_{X_i})+\epsilon_i$. This means the value of $V_i$ depends on the values of its parents, plus an independent noise. In general, the joint distribution generated by this model is Markov and faithful to $\mathcal{G}$.

\subsection{Linear system}
\label{S33}
If we measure the bulk level gene expression time series, the expression level of a gene $V_i$ can be regarded as a variable $x_i(t)$ that changes continuously along time. If we measure the single-cell level gene expression $X_i$, then its expectation is $x_i$. The expression levels $\{x_i\}$ of all genes along time satisfy a \textbf{linear system}, or have the \textbf{linearity} property, if the following linear ordinary differential equations (ODEs) hold.
\begin{equation}
	\label{eq1}
	\begin{split}
		\mathrm{d}x_1/\mathrm{d}t &=a_{11}x_1+a_{12}x_2+\cdots+a_{1n}x_n+b_1,\\
		&\cdots\\
		\mathrm{d}x_n/\mathrm{d}t &=a_{n1}x_1+a_{n2}x_2+\cdots+a_{nn}x_n+b_n.
	\end{split}
\end{equation}
This system can be rewritten as $\mathrm{d}\vec{x}/\mathrm{d}t=A\vec{x}+\vec{b}$, where $\vec{x}=(x_1,\ldots,x_n)'$, $A=[a_{ij}]$, $\vec{b}=(b_1,\ldots,b_n)'$. Here $b_i$ describes the base synthesis rate of gene $V_i$, $a_{ii}$ describes the degradation of $V_i$, and $a_{ij}$ describes the effect of $V_j$ on the expression of $V_i$. If $a_{ij}>0$, $V_j$ activates $V_i$; if $a_{ij}<0$, $V_j$ inhibits $V_i$; if $a_{ij}=0$, $V_j$ does not affect $V_i$. If we want the solution of Eq. \ref{eq1} to converge to a stable fixed point $\vec{x}_0$ with positive components, one approach is to check the condition for sign stability \cite{jeffries1977matrix,bone1988qualitative}. Another approach is to assume $b_i>0$, $a_{ij}\ge 0$ for $i\ne j$, $\sum_{j=1}^na_{ij}<0$. Then the Perron-Frobenius theorem states that all eigenvalues of $A$ have negative real parts, so that the fixed point $\vec{x}_0=-A^{-1}\vec{b}$ is stable and has positive components.

\section{Inferring {GRN} structure with different data types}
\label{main}
We have discussed different data types and different assumptions in Section \ref{biol} and Section \ref{cond}. In this section, we present the main results of this paper: with what data type, under what assumptions, what we can infer about the GRN structure. See Table \ref{Tab} for a summary of all the results. For example, in Scenario 6 (Gene, Bulk, One-time, Interventional), we can fully infer the GRN structure under the path blocking assumption, and we can partially infer the GRN structure under the directed acyclic graph assumption. These results are explained in Section \ref{scenarios}, based on the mathematical methods in Section \ref{existing} and Section \ref{novel}.

\begin{table}[]
	\begin{tabular}{|l|l|l|l|l|l|}
		\hline
		\multicolumn{2}{|l|}{\multirow{2}{*}{}}                                                                                             & \multicolumn{2}{l|}{One-Time}                                                                                                                                                                   & \multicolumn{2}{l|}{Time Series}                                                                                                                                                                                                                                                             \\ \cline{3-6} 
		\multicolumn{2}{|l|}{}                                                                                                              & \begin{tabular}[c]{@{}l@{}}Non-\\ Intervention\end{tabular}                                & Intervention                                                                                       & \begin{tabular}[c]{@{}l@{}}Non-\\ Intervention\end{tabular}                                                                   & Intervention                                                                                                                                                 \\ \hline
		\multirow{2}{*}{\begin{tabular}[c]{@{}l@{}}Gene\\ Expre\\-ssion\end{tabular}} & \begin{tabular}[c]{@{}l@{}}Single-\\ Cell\end{tabular} & \begin{tabular}[c]{@{}l@{}}Scenario 1:\\ \\ MF+DAG: \\ partial.\\ \end{tabular} & \begin{tabular}[c]{@{}l@{}}Scenario 2:\\ \\ PB: full.\\ DAG: partial.\\ MF+DAG: \\ full.\\ \end{tabular} & \begin{tabular}[c]{@{}l@{}}Scenario 3 a/b:\\ \\ 3a Joint:\\ UC: full.\\  3b Marginal:\\ MF+DAG: \\ partial.\\ \end{tabular} & \begin{tabular}[c]{@{}l@{}}Scenario 4 a/b:\\ \\ 4a Joint:\\ UC: full.\\  4b Marginal:\\ LS: full.\\ PB: full.\\ DAG: partial.\\ MF+DAG: full.\\ \end{tabular} \\ \cline{2-6} 
		& Bulk                                                   & \begin{tabular}[c]{@{}l@{}}Scenario 5:\\ \\ No.\end{tabular}                                 & \begin{tabular}[c]{@{}l@{}}Scenario 6:\\ \\ PB: full.\\ DAG: partial.\end{tabular}                    & \begin{tabular}[c]{@{}l@{}}Scenario 7:\\ \\ No.\end{tabular}                                                                    & \begin{tabular}[c]{@{}l@{}}Scenario 8:\\ \\ LS: full.\\ PB: full.\\ DAG: partial.\end{tabular}                                                                   \\ \hline
		\multirow{2}{*}{\begin{tabular}[c]{@{}l@{}}Pheno\\ -type\end{tabular}}                                                 & \begin{tabular}[c]{@{}l@{}}Single-\\ Cell\end{tabular} & \begin{tabular}[c]{@{}l@{}}Scenario 9:\\ \\ No.\end{tabular}                                 & \begin{tabular}[c]{@{}l@{}}Scenario 10:\\ \\ PB: partial.\end{tabular}                               & \begin{tabular}[c]{@{}l@{}}Scenario 11 a/b:\\ \\ No.\end{tabular}                                                                   & \begin{tabular}[c]{@{}l@{}}Scenario 12 a/b:\\ \\ PB: partial.\\ LS+DAG: \\ partial*.\\ PB+LS+DAG: \\ partial*.\end{tabular}                                                                    \\ \cline{2-6} 
		& Bulk                                                   & \begin{tabular}[c]{@{}l@{}}Scenario 13:\\ \\ No\end{tabular}                                & \begin{tabular}[c]{@{}l@{}}Scenario 14:\\ \\ PB: partial.\end{tabular}                               & \begin{tabular}[c]{@{}l@{}}Scenario 15:\\ \\ No.\end{tabular}                                                                   & \begin{tabular}[c]{@{}l@{}}Scenario 16:\\ \\ PB: partial.\\ LS+DAG: \\ partial*.\\ PB+LS+DAG: \\ partial*.\end{tabular}                                                                    \\ \hline
	\end{tabular}
	\caption{GRN structure inference with different data types: under what assumptions, what structures can be inferred. There are 16 scenarios classified by the following dimensions of data types: Gene Expression vs. Phenotype; Single-Cell vs. Bulk; One-Time vs. Time Series; Non-Interventional vs. Interventional. In Scenarios 3/4/11/12, there is an extra dimension of Joint vs. Marginal. There are different assumptions: PB: path blocking; DAG: directed acyclic graph; MF: Markov and faithful; LS: linear system; UC: unconditional. Full/partial/no means all/some/no GRN structures can be inferred. For example, ``MF+DAG: partial'' means under MF assumption and DAG assumption, GRN structure can be partially inferred. The asterisk * in Scenarios 12/16 means for some identified regulatory relations, we cannot determine whether they are activation or inhibition.}
	\label{Tab}
\end{table}

The basic method of establishing causal relation is to vary one variable and examine if the other variable changes. When we cannot interfere with gene expression, the only chance is to profile gene expression on single-cell level. Within a single cell, the stochastic fluctuation of gene expression plays a similar role as intervention. Nevertheless, by comparing Scenario 1 vs. Scenario 2 and comparing Scenario 3 vs. Scenario 4, we can see that the stochastic fluctuation is less informative than intervention. 

With interventional phenotype data (Scenarios 10, 12, 14, 16), the GRN is analogous to a black box. Although the gene expression levels are unknown, we can interfere by manipulating certain genes and use the resulting phenotype as output. Under different assumptions, we can partially reconstruct the GRN structure. 

We can evaluate the performance of each scenario based on the level of GRN structures that can be inferred. Fully recovered GRN means 2 points, and partially recovered GRN means 1 point. For example, Scenario 1 (partial GRN) has 1 point. The summation over eight scenarios (1-8) with gene expression data is 11 points, and the summation over eight scenarios with phenotype data (9-16) is 4 points. We compare the overall score for each data dimension. Gene expression vs. Phenotype: 11 vs. 4; Single-cell vs. Bulk: 9 vs. 6; Time series vs. One-time: 8 vs. 7; Interventional vs. Non-interventional: 12 vs. 3. We can see that the gene expression data and interventional data are more informative.

Scenario 4 (Gene, Single-cell, Time series, Interventional) is the most informative case. We can see that in Scenario 4, the GRN structure can be fully inferred under various assumptions.

\section{Biological details and concerns}
\label{concern}

\subsection{Measurements related to {GRN}}
\label{datatype}
We briefly introduce biological experiments related to inferring GRN structures and their restrictions, as an extension for Subsection \ref{edt}. We do not aim at covering all important related papers. Readers may refer to some review papers for detailed summaries on related biological techniques \cite{ding2020systematic,wang2019comparative,perrimon2010vivo}. 

The expression of a protein-coding gene consists of transcription (DNA to mRNA) and translation (mRNA to protein). Therefore, to measure the expression level of a given gene, we can either measure the amount of the corresponding mRNA or measure the amount of the corresponding protein. There are many methods with various reliabilities and accuracies to measure the amount of mRNA and/or protein on the bulk level and the single-cell level \cite{andrews2018identifying,wang2009rna,sinkala2017profiling}. By now (2021), these methods are not 100\% accurate, especially on single-cell level \cite{svensson2020droplet}. A more important problem is that most mRNAs and proteins are confined within living cells (except for secreted proteins). This means cells have to be killed before these molecules can be harvested, and then some analytical methods can be applied to quantify the abundance of these molecules. Therefore, with these methods, a cell or a cell population can be measured only once \cite{andrews2018identifying}.

Many mRNAs have less than 20 copies in one cell \cite{so2011general}. Thus the gene expression of a single cell is too stochastic to be described by a deterministic model. We can repeat the gene expression measurement experiment over multiple single cells and obtain a probability distribution of single-cell gene expression \cite{kaern2005stochasticity}. When the observation is based on a large number of cells (bulk level), stochasticity is averaged out, so that the dynamics should be deterministic. Theoretically, repeating the gene expression measurement experiment over a large number of cells should produce the same result. In reality, various technical issues weaken the repeatability of measuring gene expression \cite{rondina2020longitudinal}. For example, external factors that affect gene expression, such as nutrition concentrations, are hardly the same for two cell populations. Therefore, when we sample two populations from the same cell line (especially if we sample at different time points), their gene expression profiles might not be exactly the same.

Traditional approaches in the studies of gene expression are observational, meaning that we directly measure cell(s) at stationary without interfering with gene expression. Recent techniques like gene knockdown or gene knockout allow us to temporarily or permanently alter the expression of genes of interest \cite{nikam2018journey,pickar2019next,gujral2014exploiting}. Most interventions decrease the expression of corresponding genes, but increasing is also possible \cite{wang2013integration}. After the intervention, we can measure the gene expression data at a certain time point. By now (2021), on bulk level, commonly used knockdown interventions cannot robustly decrease the expression of targeted genes to zero \cite{munkacsy2016validation}. For example, Hurley et al. applied siRNA to disrupt certain genes. For only around 70\% genes, the bulk level expression decreases to less than 40\% of the original level. For around 10\% genes, siRNAs even increase their bulk expression levels \cite{hurley2012gene}. On the single-cell level, we can measure and select cells that are successfully intervened. However, the measurement is not 100\% accurate, so that the selection is not always correct. Besides, selection might be biased towards cells with low baseline expression levels of targeted genes. Interventions cannot be used to maintain gene expression to fixed non-zero levels.

After the intervention, we hope to measure how the gene expression evolves over time. Under current technologies, we can only measure a cell or a cell population once. Therefore, at the bulk level, we can measure different populations at different time points after the same intervention. The gene expression level at the bulk level is a deterministic number. Thus theoretically, measuring different populations at different time points is equivalent to measuring the same population at different time points. At the single-cell level, we can repeat the measurement for different cells at the same time point to obtain the marginal distribution of gene expression at each time point. However, since we can only measure a cell once, we cannot obtain the joint distribution of gene expression at different time points. The joint distribution can provide extra information (correlation coefficient, etc.) than marginal distributions. Regardless of technology restrictions, we discuss both cases, depending on whether the joint distribution can be measured.

Some phenotypes, such as growth rate, number of molecules released, and drug resistance, could be used to reflect gene expression levels \cite{galluzzi2012prognostic}. These phenotypes can be measured on bulk level or single-cell level. Besides, some phenotypes can be measured without perturbing or even killing the cells. This means such phenotypes can be measured at different time points for the same cell(s). Nevertheless, phenotypic transitions involve complicated phenomena and mechanisms \cite{jiang2017phenotypic}. Therefore, we need to be cautious when utilizing phenotypes in determining GRN structures.

\subsection{Concerns about biological {GRN} assumptions}
\label{bc}
In Subsection \ref{biola}, we propose four biological assumptions to simplify the GRN structure inference problem. Here we discuss biological concerns about these assumptions.

(A) Various non-genetic factors, such as vitamins and minerals, directly regulate gene expression. Since this paper focuses on regulations between genes, we assume such non-genetic factors are kept at constants. Nevertheless, we do not fully understand what factors can affect gene expression, and it is difficult to control the abundance of so many kinds of molecules. Besides, when we manipulate the expression of genes, the concentrations of some non-genetic factors might also be changed \cite{reidling2008mechanisms}.

(B) The regulation between genes might occur only under specific conditions. We assume that such conditions are satisfied, so that all regulations can be observed. Since we need to infer whether a regulation between genes exists, we cannot guarantee that the conditions for this regulation are satisfied. We can also remove this assumption, and just aim at reconstructing the GRN that operates under naive conditions.

(C) To correctly infer the GRN structure, we need to assume that all genes in the target GRN are considered. For example, if the true GRN is $V_1\to V_2\to V_3$, but $V_2$ cannot be measured or intervened with, then we can observe that varying $V_1$ in any case would lead to the change of $V_3$. Therefore, we would obtain a false direct edge $V_1\to V_3$. We also need to assume that the set of considered genes is not too large. The amount of data needed and the computational time cost for GRN structure inference methods grow fast with the number of genes considered. If we cannot exclude most genes that are irrelevant to the target GRN beforehand, then either the inference methods behave poorly, or the cost (time and money) is unbearable. Nevertheless, before inferring the GRN structure, it is almost impossible to accurately distinguish which genes are involved or not involved in the target GRN.

(D) To simplify the classification of activation and inhibition, we assume the regulation is monotonic, meaning that if $V_i$ regulates $V_j$, then the increase of $V_i$ always leads to the increase or decrease of $V_j$. In reality, Crn1 at different concentrations can either activate or inhibit the Arp2/3 complex in yeast \cite{liu2011mechanism}. This illustrates that gene regulation might be non-monotonic.

\subsection{Concerns about mathematical {GRN} assumptions}
\label{ca}
In Section \ref{cond}, we propose four mathematical assumptions that will be used in certain scenarios. Here we discuss biological concerns about these assumptions.

(1) The Path Blocking (PB) assumption states that the intervention on one gene has no effect on another gene (or a phenotype), if and only if other intervened genes block all paths. Consider a GRN $V_1\to V_2\to V_3$. Some interventions (e.g., siRNA) are not strong enough, so that after interfering with $V_2$, $V_1$ can still affect $V_2$, then $V_3$. On the other hand, if we assume that $V_1$ is necessary for the expression of $V_2$, then as we knock out $V_1$, $V_2$ cannot express. In this case, interventions on $V_2$ have no extra effect on $V_3$, although the path $V_2\to V_3$ is not blocked. Even though the path blocking assumption is powerful in the GRN structure inference, it is not very realistic.

(2) The directed acyclic graph (DAG) assumption states that the GRN has no cycle, which is crucial in many scenarios, but it is truly unrealistic in biology. Introducing four factors can induce somatic cells to become pluripotent stem cells, which have very different gene expression patterns \cite{takahashi2006induction}. Therefore, the related GRN is extremely large, and it is unavoidable that some feedback loops are contained \cite{bansal2006inference}. Although some reported GRNs have no cycles, it is possible that some unobserved genes form cycles in the real GRN.

(3) The Markov and faithful (MF) assumption states that the conditional independence relations in the distribution of gene expression are consistent with the causal DAG. Under this assumption, conditional independence should be correctly identified. However, due to the heterogeneity of cells \cite{wang2018some}, there might be multiple cell types, each of which has its own gene expression distribution, and the measured gene expression distribution might be the combination of multiple distributions. Such mixing might make independent genes show pseudo dependence. Mathematically, assume $X_1,X_2$ (expression of two genes for cell type 1) are independent, and $X_1',X_2'$ (expression of two genes for cell type 2) are independent. Define $\bar{X}_1=X_1$ with probability $0.5$, and $\bar{X}_1=X_1'$ with probability $0.5$, and $\bar{X}_2$ is defined similarly (observed expression of two genes for mixed cell population). Then $\bar{X}_1$ and $\bar{X}_2$ might not be independent.

(4) The linear system (LS) assumption states that the net change rate of one gene (corresponding mRNA or protein) linearly depends on the quantity of another gene, regardless of other genes. In reality, the regulation of gene expression might be highly nonlinear. Specifically, if gene $V_1$ inhibits gene $V_2$, then when the level of $V_2$ is $0$, the net change rate of $V_2$ might still be negative, a contradiction. One alternative approach is to add nonlinear terms to the dynamics \cite{polynikis2009comparing}. We need to determine the forms of such nonlinear terms in advance, and such terms should depend on finitely many parameters. Although this approach can better fit the real dynamics, there are some problems: (i) we do not know whether the forms of nonlinear terms are correct; (ii) with more parameters to infer, the reliability decreases; (iii) some methods, such as the one in Subsection \ref{partial}, would fail with nonlinear terms.

\section{Existing related mathematical results}
\label{existing}
In this section, each subsection contains a known mathematical method that will be used to infer GRN structures in certain scenarios. At the end of each subsection, there is an ``A/I Remark'' that explains how to determine whether an identified regulatory relation is activation or inhibition.

\subsection{Use conditional independence to infer {DAG} structure}
\label{DAG}
This subsection is a standard topic in causal inference. Readers may refer to related monographs for background and details \cite{PMJ}.

For a distribution $\mathbb{P}$ that is Markov and faithful to a causal DAG, there is an edge between $V_i$ and $V_j$ if and only if $V_i$ and $V_j$ are independent conditioned on some other variables. (When some genes are highly correlated, a more rigorous method is to detect a Markov boundary and determine its uniqueness \cite{wang2020causal}.) Consider three DAGs: $V_1\to V_2\to V_3$, $V_1\leftarrow V_2 \leftarrow V_3$, $V_1\leftarrow V_2 \to V_3$. If $\mathbb{P}$ is Markov and faithful to one of these DAGs, then it is also Markov and faithful to the other two. Thus we cannot distinguish them. Therefore, using conditional independence, we can determine all edges of the unknown DAG, but some edge directions are unknown.

\begin{remark} 
	For an inferred edge $V_i\to V_j$, we can calculate the conditional covariance between $V_i$ and $V_j$, conditioned on other possible parents of $V_j$. Positive conditional covariance means activation, and negative conditional covariance means inhibition. 
\end{remark}

\subsection{Inferring {GRN} structure with {ODE} model}
\label{ODE}
For the linear ODE system Eq. \ref{eq1}, if we know the value of $x_i(t)$ at different time points, we can calculate $\mathrm{d}x_i/\mathrm{d}t$, and then calculate the parameters $\{a_{ij}\}$ and $\{b_i\}$ by solving a linear algebraic equation system \cite{pollicott2012extracting}. Here we need to assume that the interventions added on different genes only change the initial values $x_i(0)$, but not system parameters $\{a_{ij}\}$ and $\{b_i\}$. If the genes in a GRN satisfy such a linear system, then $a_{ij}\ne 0$ if and only if there is an edge $V_j\to V_i$. We can also assume the ODE system has nonlinear terms with known forms \cite{polynikis2009comparing}. Similarly, we can solve the parameters and determine the regulatory relationships. One can even test different nonlinear systems and select the best one \cite{bansal2006inference}.

\begin{remark}
	In the linear ODE model, if $a_{ij}>0$, $V_j$ activates $V_i$; if $a_{ij}<0$, $V_j$ inhibits $V_i$. The value of $a_{ij}$ quantifies the regulation strength.
\end{remark}

\subsection{Determining {GRN} in stochastic process model}
\label{GSP}
Assume we can measure the same cell at different time points, then we obtain the joint distribution of gene expression profile as a stochastic process. This model is also called the dynamic Bayesian network. For example, we consider three variables at different time points, $[X_1(0),X_2(0),X_3(0)]$, $[X_1(1),X_2(1),X_3(1)],\ldots$. Here $X_1(t),X_2(t),X_3(t)$ are the expression levels of genes $V_1,V_2,V_3$ at time $t$. We assume there are no other genes that affect the expression of $V_1,V_2,V_3$. 

Assume the underlying GRN is $V_1\to V_2\to V_3$, then this means current $X_1$ can directly affect future $X_1$, current $X_1,X_2$ can directly affect future $X_2$, and current $X_2,X_3$ can directly affect future $X_3$. The causal relations between different variables in this process are shown in Fig. \ref{F1}. An arrow can only point from a variable at time $t$ to a variable at time $t+1$. Therefore, this illustration is a DAG. We can project this process DAG along the time axis to obtain the underlying GRN.  

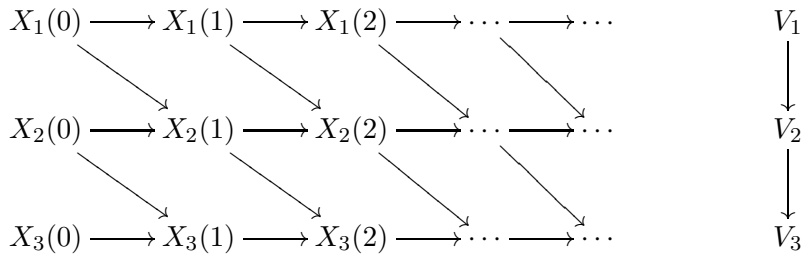
\begin{figure}
	\center
	$\xymatrix{
		X_1(0)\ar[r]\ar[rd]&X_1(1)\ar[r]\ar[rd]&X_1(2)\ar[r]\ar[rd]&\cdots\ar[r]\ar[rd]&\cdots&&V_1\ar[d]\\
		X_2(0)\ar[r]\ar[rd]&X_2(1)\ar[r]\ar[rd]&X_2(2)\ar[r]\ar[rd]&\cdots\ar[r]\ar[rd]&\cdots&&V_2\ar[d]\\
		X_3(0)\ar[r]&X_3(1)\ar[r]&X_3(2)\ar[r]&\cdots\ar[r]&\cdots&&V_3
	}$\\
	\caption{Stochastic process model. $X_1(0)$ can affect $X_1(1)$ and $X_2(1)$; $X_2(0)$ can affect $X_2(1)$ and $X_3(1)$; $X_3(0)$ can affect $X_3(1)$. The right most $V_1\to V_2\to V_3$ is the corresponding GRN.}
	\label{F1}
\end{figure}

If we have the joint distribution of two neighboring time points, then we can use conditional independence to infer the structure of the process DAG. Since each arrow can only point from an earlier time to a later time, the direction is determined. The projection of the process DAG is the underlying GRN. Since $X_i(t+1)$ naturally depends on $X_i(t)$, we do not project this edge as $V_i\to V_i$. If the underlying GRN has cycles, the process DAG is still a DAG. Thus the GRN with cycles can also be inferred.

If the time step of this stochastic process is not small enough, the inference might be problematic. For example, in Fig. \ref{F1}, if the observation time step is 2, so that $[X_1(1),X_2(1),X_3(1)]$ are invisible, then an indirect relation $X_1(0)\to X_2(1)\to X_3(2)$ will be misinterpreted as a direct edge $X_1(0)\to X_3(2)$, meaning that the inferred GRN has an extra edge $X_1\to X_3$.

For GRN $V_1\to V_2\to V_3$, from Fig. \ref{F1}, we can see that conditioned on $[X_2(0),X_2(1),\ldots]$, $[X_1(0),X_1(1),\ldots]$ and $[X_3(0),X_3(1),\ldots]$ are independent. However, for any $t$, even if the process has reached stationary, we might not have $X_1(t)\ind X_3(t)\mid X_2(t)$. Therefore, in the stochastic process model, if we only have the stationary distribution, we cannot use conditional independence to infer the GRN structure. The causal DAG model discussed in Subsection \ref{DAG} might not be regarded as the stationary situation of the stochastic process model.

\begin{remark}
	Similar to the discussion in Subsection \ref{DAG}, we can use conditional covariance to determine whether an arrow represents activation or inhibition. 
\end{remark}

\section{Novel related mathematical results}
\label{novel}
In this section, each subsection contains a novel mathematical method that will be used to infer GRN structures in certain scenarios. At the end of each subsection, there is an ``A/I Remark'' that explains how to determine whether an identified regulatory relation is activation or inhibition. In Section \ref{imp}, these methods are implemented and evaluated on simulated data.

\subsection{Use path blocking relation to infer {GRN} structure with gene expression data}
\label{PBG}
Consider an unknown GRN with genes $V_1,\ldots,V_n$. We can intervene with any genes and measure the expression profile of all genes. Assume the path blocking property holds. Then for each $V_i,V_j$ and each subset $\mathcal{S}$ of $\{V_1,\ldots,V_n\}\backslash\{V_i,V_j\}$, we know whether $\mathcal{S}$ blocks $V_i$ to $V_j$, meaning that any directed path from $V_i$ to $V_j$ passes at least one vertex in $\mathcal{S}$. In this case, we can infer the whole GRN: there is an edge $V_i\to V_j$ if and only if other vertices cannot block $V_i$ to $V_j$. 

See Subsection \ref{80} and Appendix \ref{apppbg} for the performance of this method on simulated data and corresponding discussions. 

\begin{remark}
	For an identified edge $V_i\to V_j$, add interventions on certain genes, so that other paths from $V_i$ to $V_j$ are blocked. Then $V_i$ activates $V_j$ if the decrease of $V_i$ leads to the decrease of $V_j$, and $V_i$ inhibits $V_j$ if the decrease of $V_i$ leads to the increase of $V_j$.
\end{remark}

\subsection{Use path blocking relation to infer {GRN} structure with phenotype data}
\label{PBP}
Consider an unknown GRN with genes $V_1,\ldots,V_n$ and a phenotype $V_0$. We can intervene with any genes, but we can only measure the level of the phenotype, not the genes. We assume there is no edge starting from $V_0$. Assume the path blocking property holds. Then for each $V_i$ and each subset $\mathcal{S}$ of $\{V_1,\ldots,V_n\}\backslash\{V_i\}$, we know whether $\mathcal{S}$ blocks $V_i$ to $V_0$, meaning that any directed path from $V_i$ to $V_0$ passes at least one vertex in $\mathcal{S}$. In this case, the GRN structure can be partially inferred (at least $n$ edges if there is a directed path from each $V_i$ to $V_0$).

If a subset $\mathcal{S}$ of $\{V_1,\ldots,V_n\}\backslash\{V_i\}$ blocks $V_i$ to $V_0$, but any proper subset of $\mathcal{S}$ cannot block $V_i$ to $V_0$, then $\mathcal{S}$ is called a minimal blocking set. If a blocking set $\mathcal{S}$ is not minimal, then $\mathcal{S}$ contains a blocking subset that is minimal. Define $\beta(V_i)$ to be all minimal subsets that block $V_i$ to $V_0$. If two GRNs have the same path blocking relations, then they have the same $\beta(V_i)$ for each $V_i$, and vice versa.

For the two GRNs in Fig. \ref{expb}, $\beta(V_1)=\{\{V_3\}\}$, $\beta(V_2)=\{\{V_3\}\}$, $\beta(V_3)=\emptyset$. Thus they are equivalent in the sense of path blocking. If a directed edge appears in all equivalent GRNs, we can determine that this edge exists; if a directed edge appears in none of equivalent GRNs, we can determine that this edge does not exist; if a directed edge appears in some but not all equivalent GRNs, we cannot determine whether this edge exists.

\begin{figure}
	\center
	$\xymatrix{
		V_1\ar[r]&V_3\ar[r]&V_0&&V_1\ar[r]\ar[d]&V_3\ar[r]&V_0\\
		V_2\ar[ru]&&&&V_2\ar[ru]&&
	}$\\
	\caption{Two equivalent GRNs in the sense of path blocking. The regulation path from $V_1$ to $V_0$ (or $V_2$ to $V_0$) is fully blocked if and only if we add intervention on $V_3$.}
	\label{expb}
\end{figure}
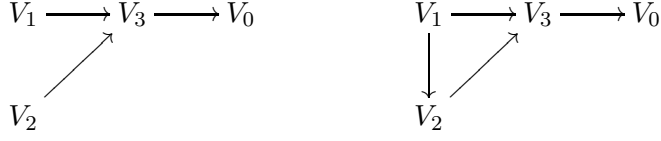

\begin{proposition}
	\label{pbpt}
	The following procedure describes how to determine certain edges. (1) There is an edge $V_i\to V_0$ if and only if $\beta(V_i)=\emptyset$. (2) If there exists $\mathcal{S}\in\beta(V_i)$, so that $V_j\notin \mathcal{S}$, and $\mathcal{S}$ cannot block $V_j$ to $V_0$, then there is no edge $V_i\to V_j$. (3) If (2) is not satisfied, but there exists $\mathcal{S}\in\beta(V_i)$, so that $V_j\in \mathcal{S}$, then there is an edge $V_i\to V_j$. (4) If $\beta(V_i)=\emptyset$, or for any $\mathcal{S}\in\beta(V_i)$, we have $V_j\notin \mathcal{S}$, and $\mathcal{S}$ blocks $V_j$ to $V_0$, then we cannot determine whether $V_i\to V_j$ exists.
\end{proposition}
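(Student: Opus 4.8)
The plan is to recast the language of blocking in purely graph-theoretic terms and then dispatch the four clauses separately. Throughout, ``$\mathcal S$ blocks $V_i\to V_0$'' means exactly that $\mathcal S$ is a vertex set separating $V_i$ from $V_0$ (every directed path is hit), so a minimal blocking set is a minimal such separator; the feature I will use repeatedly is that minimality forces, for each $v\in\mathcal S$, a \emph{private} path $V_i\rightsquigarrow v\rightsquigarrow V_0$ meeting $\mathcal S$ only at $v$ (delete $v$ from $\mathcal S$, use that $\mathcal S\setminus\{v\}$ no longer blocks, and note the escaping path must still meet $\mathcal S$, hence only at $v$). I also record that the observable data is exactly the family $\{\beta(V_k)\}_k$ and that two GRNs are indistinguishable precisely when these families agree. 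Clause (1) is then immediate: a direct edge $V_i\to V_0$ is a length-one path no subset of the other genes can intersect, so $\beta(V_i)=\emptyset$; conversely, if there is no direct edge then every $V_i\to V_0$ path has an interior gene, so the set of all other genes blocks and $\beta(V_i)\ne\emptyset$. Clause (2) is a one-line concatenation: if the witness $\mathcal S$ fails to block $V_j\to V_0$, splice the hypothetical edge $V_i\to V_j$ onto an $\mathcal S$-avoiding path $V_j\rightsquigarrow V_0$ (taking the tail from $V_i$ if $V_i$ recurs) to obtain an $\mathcal S$-avoiding path $V_i\rightsquigarrow V_0$, contradicting that $\mathcal S$ blocks.

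For (3) I would argue the contrapositive: assuming no edge $V_i\to V_j$ while some $\mathcal S_0\in\beta(V_i)$ contains $V_j$, I will manufacture the forbidden witness of clause (2). Minimality gives a private path $Q=Q_1Q_2$ through $V_j$ avoiding $\mathcal W:=\mathcal S_0\setminus\{V_j\}$, with $Q_1\colon V_i\rightsquigarrow V_j$ and $Q_2\colon V_j\rightsquigarrow V_0$ internally disjoint; since there is no direct edge, $Q_1$ has length $\ge 2$. Because every $V_i\to V_0$ path avoiding $\mathcal W$ must pass through $V_j$, I can cut all $V_i\rightsquigarrow V_j$ subpaths (each avoiding $\mathcal W$) by a minimal set $\mathcal C$ of interior vertices, whereupon $\mathcal S':=\mathcal W\cup\mathcal C$ blocks $V_i\to V_0$ and omits $V_j$. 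The decisive step is to choose $\mathcal C$ disjoint from $V(Q_2)$: I claim no $V_i\rightsquigarrow V_j$ path avoiding $\mathcal W$ can have all its interior vertices inside $V(Q_2)$, for otherwise its last interior vertex $b$ would satisfy $b\to V_j$ with $b$ strictly after $V_j$ on $Q_2$, and re-routing $V_i\rightsquigarrow b$ directly into the $Q_2$-tail $b\rightsquigarrow V_0$ would give an $\mathcal S_0$-avoiding $V_i\to V_0$ path, contradicting that $\mathcal S_0$ blocks. With $\mathcal C$ avoiding $V(Q_2)$, the path $Q_2$ witnesses that $\mathcal S'$, and hence any minimal sub-cut $\mathcal S''\subseteq\mathcal S'$, fails to block $V_j\to V_0$, so $\mathcal S''$ is exactly the witness ruled out by $\lnot C_2$.

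Clause (4) I would prove by showing that, under its hypothesis, toggling the single edge $e=(V_i\to V_j)$ leaves every relation ``$\mathcal S$ blocks $V_k\to V_0$'' unchanged, so the two GRNs realize identical data $\{\beta(V_k)\}$ and the edge is genuinely unidentifiable. Since deleting an edge only removes paths and adding one only creates them, one direction is free in each case; the work is to show no relation is newly unblocked. The engine is a cut lemma drawn from the hypothesis: because every minimal $V_i\to V_0$ cut blocks $V_j\to V_0$, any $\mathcal S$ failing to block $V_j\to V_0$ contains no minimal $V_i\to V_0$ cut, hence fails to block $V_i\to V_0$ as well. Given a would-be new $\mathcal S$-avoiding path $V_k\rightsquigarrow V_i\xrightarrow{e}V_j\rightsquigarrow V_0$, its suffix shows $\mathcal S$ does not block $V_j\to V_0$, the lemma then yields an $\mathcal S$-avoiding detour $V_i\rightsquigarrow V_0$ not using $e$, and splicing it onto the prefix $V_k\rightsquigarrow V_i$ reproduces the escape without $e$; a short cut-replacement step (replace the separator by $\mathcal S^\ast\cup\{V_j\}$ and invoke that no minimal cut contains $V_j$) closes the single residual sub-case that arises when deleting $e$. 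I would also check that the toggled graph remains a legal GRN, i.e. every vertex still reaches $V_0$, which follows from the preserved relations at $\mathcal S=\emptyset$.

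The main obstacle is the rerouting in (3) and (4), precisely because this subsection assumes no acyclicity: naive attempts to relocate $V_j$ inside a separator can be spoiled by cycles $V_j\rightsquigarrow b\rightsquigarrow V_j$, and the whole argument hinges on the two ``reroute through $Q_2$'' and ``non-blocking propagates from $V_j$ to $V_i$'' lemmas that quarantine exactly those cycles. Finally, I would note that, writing $C_2$ for the hypothesis of clause (2), the three conditions are mutually exclusive and exhaust the case $\beta(V_i)\ne\emptyset$ (they correspond to $C_2$, to $\lnot C_2$ with some separator containing $V_j$, and to $\lnot C_2$ with none), so the stated procedure is well defined and its four clauses together pin down every edge the data can possibly determine.
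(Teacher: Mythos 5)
Your proof is correct and follows essentially the same route as the paper's: trivial verification for (1), path-splicing for (2), manufacturing a witness for condition (2) via a cut of the $\mathcal{W}$-avoiding $V_i\rightsquigarrow V_j$ paths chosen disjoint from the suffix path $Q_2$ for (3), and an edge-toggling indistinguishability argument for (4) driven by the lemma that any set blocking $V_i$ to $V_0$ must also block $V_j$ to $V_0$. The only differences are cosmetic: the paper instantiates your cut $\mathcal{C}$ concretely as the set of in-neighbors of $V_j$ reachable from $V_i$ while avoiding $\mathcal{S}\setminus\{V_j\}$ (justified by the same reroute-and-shrink trick you use), and it handles the edge-present case of (4) by showing condition (4) transfers to the edge-deleted graph and then reusing the edge-absent case, rather than by your direct cut-replacement patch with $\mathcal{S}^{\ast}\cup\{V_j\}$.
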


\begin{proof}
	Assume $\emptyset\in \beta(V_i)$ (notice this is not $\beta(V_i)=\emptyset$), then there is no path from $V_i$ to $V_0$. Obviously, there is no edge $V_i\to V_0$. Besides, if $\emptyset\notin \beta(V_j)$, meaning that there is a path from $V_j$ to $V_0$, then we have case (2), and there is no edge $V_i\to V_j$. If $\emptyset\in \beta(V_j)$, meaning that there is no path from $V_j$ to $V_0$, then we have case (4). Now $V_i,V_j$ are isolated from $V_0$, whether edge $V_i\to V_j$ exists does not affect the blocking relations.
	
	In the rest of this proof, assume $\emptyset\notin \beta(V_i)$.
	
	(1) If there is an edge $V_i\to V_0$, then other vertices cannot block this path. If there is no edge $V_i\to V_0$, then $\{V_1,\ldots,V_n\}\backslash\{V_i\}$ can block $V_i$ to $V_0$, and it contains a minimal blocking subset.
	
	(2) Choose a path $V_j\to\cdots\to V_0$ that cannot be blocked by $\mathcal{S}$. If there exists an edge $V_i\to V_j$, then the path $V_i\to V_j\to\cdots\to V_0$ cannot be blocked by $\mathcal{S}$, a contradiction.
	
	(3) Define $\mathcal{S}'=\mathcal{S}\backslash\{V_j\}$. Assume there is no edge $V_i\to V_j$. Since $\mathcal{S}'$ cannot block $V_i$ to $V_0$, there is a path $V_i\to \cdots \to V_k\to V_j$ that is not blocked by $\mathcal{S}'$, and there is a path $V_j\to \cdots \to V_0$ that is not blocked by $\mathcal{S}'$. 
	
	$V_k$ cannot appear in the path $V_j\to \cdots \to V_0$. Otherwise, we can combine these two paths and shrink out $V_j$, to obtain a path $V_i\to \cdots\to V_k\to \cdots V_0$. This path from $V_i$ to $V_0$ does not pass through $V_j$, and it cannot be blocked by $\mathcal{S}'$, which means it cannot be blocked by $\mathcal{S}$, a contradiction.
	
	Find all possible $V_l$ with the property that there exists a path $V_i\to \cdots \to V_l\to V_j$ that is not blocked by $\mathcal{S}'$, then $V_l$ also cannot appear in the path $V_j\to \cdots \to V_0$. Define the set of all such $V_l$ (including $V_k$) to be $\mathcal{T}$. $\mathcal{T}\cup\mathcal{S}'$ blocks any path from $V_i$ to $V_0$, and it cannot block the path $V_j\to \cdots \to V_0$. We can find a minimal blocking subset in $\mathcal{T}\cup\mathcal{S}'$, which belongs to $\beta(V_i)$, and it cannot block $V_j$ to $V_0$. Thus the condition of (2) is satisfied, a contradiction.
	
	(4) If the condition of (4) is satisfied, we show that when the edge $V_i\to V_j$ does not exist, adding this edge into the GRN does not change path blocking relations, and when the edge $V_i\to V_j$ exists, deleting this edge from the GRN does not change path blocking relations. Therefore, we cannot determine whether the edge $V_i\to V_j$ exists.
	
	If $\beta(V_i)=\emptyset$, then there is an edge $V_i\to V_0$. Adding or deleting $V_i\to V_j$ does not change $\beta(V_i)$. If $\mathcal{S}$ blocks $V_k$ to $V_0$, then $\mathcal{S}$ also blocks $V_k$ to $V_i$, and adding or deleting $V_i\to V_j$ does not matter. If $\mathcal{S}$ does not block $V_k$ to $V_0$, then depending on whether $\mathcal{S}$ can block $V_k$ to $V_i$ or not, we can both directly see that the existence of $V_i\to V_j$ does not matter.
	
	In the following, assume $\beta(V_i)\ne \emptyset$. Assume the edge $V_i\to V_j$ is not in the GRN. Consider $\mathcal{S}$ that blocks $V_i$ to $V_0$, and $\mathcal{R}$ that cannot block $V_i$ to $V_0$. Since $\mathcal{S}$ contains a minimal subset, and this subset blocks $V_j$ to $V_0$, after adding the edge $V_i\to V_j$, it can still block $V_i$ to $V_0$. After adding the edge $V_i\to V_j$ in the GRN, $\mathcal{R}$ still cannot block $V_i$ to $V_0$. For $V_k$ that has a path to $V_i$, we can use the same argument to show that blocking subsets of $V_k$ to $V_0$ are not changed.  
	
	Assume the edge $V_i\to V_j$ is in the GRN. We prove that after deleting this edge, the condition of (4) still holds. Then using the above argument, we can see that after deleting this edge, adding it back does not change the path blocking relations. After deleting this edge, if there is a minimal subset $\mathcal{T}$ that contains $V_j$ and blocks $V_i$ to $V_0$, then before deleting the edge $V_i\to V_j$, $\mathcal{T}$ is still a minimal subset that blocks $V_i$ to $V_0$, contradicting to the condition of (4). After deleting this edge, assume there is a minimal subset $\mathcal{R}$ that blocks $V_i$ to $V_0$ but does not contain $V_j$ and does not block $V_j$ to $V_0$. Then before deleting the edge $V_i\to V_j$, $\mathcal{R}\cup\{V_j\}$ blocks $V_i$ to $V_0$, and it contains a minimal blocking subset $\mathcal{R}'$ that contains $V_j$. The reason for $V_j\in\mathcal{R}'$ is that $\mathcal{R}$ cannot block $V_j$ to $V_0$ after deleting the edge $V_i\to V_j$. Such $\mathcal{R}'$ violates the condition of (4). In sum, after deleting the edge $V_i\to V_j$, $\beta(V_i)$ is not changed. Similarly, for another $V_k$, $\beta(V_k)$ is not changed. 
	
\end{proof}

Without restrictions on connectivity, if there is no edge in this network, then we cannot identify any edges. In the rest of this subsection, assume there is a directed path from each $V_i$ to $V_0$.

Consider the shortest path from $V_i$ to $V_0$, $V_i\to V_k \to \cdots\to V_0$. Define $\mathcal{S}$ to be the set of all children vertices of $V_i$, then $\mathcal{S}$ blocks $V_i$ to $V_0$. Since vertices in $\mathcal{S}\backslash \{V_k\}$ cannot be closer to $V_0$ than $V_k$, $\mathcal{S}\backslash \{V_k\}$ cannot block the path $V_i\to V_k \to \cdots\to V_0$. Therefore, a minimal subset of $\mathcal{S}$ that blocks $V_i$ to $V_0$ must contain $V_k$. This means that the edge $V_i\to V_k$ can be identified. Therefore, we can identify at least one edge starting from each $V_i$. For an unknown GRN, we can identify at least $n$ edges. Consider the GRN $V_n\to\cdots\to V_1\to V_0$. We can identify exactly $n$ edges in this GRN. Thus $n$ is the minimal number of edges that could be identified. 

Consider a GRN that there is an edge from each vertex in $\mathcal{S}_2=\{V_{n/2+1},\ldots,V_{n}\}$ to each vertex in $\mathcal{S}_1=\{V_1,\ldots,V_{n/2}\}$, and there is an edge from each vertex in $\mathcal{S}_1$ to $V_0$. All $n^2/4+n/2$ edges in this GRN can be identified. We guess $n^2/4+n/2$ is the maximal number of edges that could be identified. 

See Subsection \ref{81} and Appendix \ref{apppbp} for the performance of this method on simulated data and corresponding discussions. 

\begin{remark}
	Assume we have identified an edge $V_i\to V_j$. From the above argument, all edges in the shortest path from $V_j$ to $V_0$ can be identified. We add interventions on all genes except $V_i$ and genes on this shortest path. If the intervention on $V_i$ and the intervention on $V_j$ have the same sign on the change of $V_0$, then $V_i$ activates $V_j$. If they have different signs, then $V_i$ inhibits $V_j$. Here we assume that the effect of intervention is known to be decreasing gene expression.
\end{remark}

\subsection{Use ancestor-descendant relation to infer {GRN} structure with gene expression data}
\label{ADR}
Consider an unknown GRN with genes $V_1,\ldots,V_n$, and assume the GRN is a DAG. We can add interventions on any genes and measure their expression levels. If the intervention on $V_i$ leads to the change of $V_j$, then $V_i$ is an ancestor of $V_j$, and there is a directed path from $V_i$ to $V_j$ on the graph. With such ancestor-descendant relations, we can partially infer the GRN structure: at least $n-1$ edges can be identified. This method is rather elementary, but it is model-free.

For the two GRNs in Fig. \ref{exad}, $V_1$ has descendants $V_2,V_3,V_4$; $V_2$ has descendants $V_3,V_4$; $V_3$ has descendant $V_4$; $V_4$ has no descendant. Thus they are equivalent in the sense of ancestor-descendant relations. If a directed edge appears in all equivalent GRNs, we can determine that this edge exists; if a directed edge appears in none of equivalent GRNs, we can determine that this edge does not exist; if a directed edge appears in some but not all equivalent GRNs, we cannot determine whether this edge exists.

\begin{figure}
	\center
	$\xymatrix{
		V_1\ar[d]&V_3\ar[r]&V_4&&V_1\ar[r]\ar[d]&V_3\ar[r]&V_4\\
		V_2\ar[ru]&&&&V_2\ar[ru]&&
	}$\\
	\caption{Two equivalent GRNs in the sense of ancestor-descendant relations.}
	\label{exad}
\end{figure}
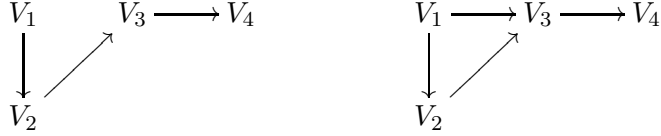

\begin{proposition}
	\label{adrp}
	The following procedure describes how to determine certain edges. (1) If $V_j$ is not a descendant of $V_i$, then we can determine that the edge $V_i\to V_j$ does not exist. (2) If $V_j$ is a descendant of $V_i$, and $V_i$ has another descendant $V_k$, which is an ancestor of $V_j$, then we cannot determine the existence of the edge $V_i\to V_j$. (3) If $V_j$ is a descendant of $V_i$, and $V_i$ does not have another descendant $V_k$, which is an ancestor of $V_j$, then we can determine that the edge $V_i\to V_j$ exists. 
\end{proposition}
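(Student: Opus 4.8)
The plan is to translate the problem entirely into the language of reachability. The interventional data tell us, for every ordered pair $(V_i,V_j)$, whether $V_i$ is an ancestor of $V_j$; equivalently, we know the reachability relation $R$, where $V_i\,R\,V_j$ means there is a directed path of length at least one from $V_i$ to $V_j$. By the discussion preceding the proposition, two GRNs are equivalent exactly when they induce the same $R$, and an edge $V_i\to V_j$ may be declared to exist (resp.\ not to exist) only if it is present in every (resp.\ no) DAG whose reachability relation equals $R$. So the whole proof reduces to deciding, for each ordered pair, whether the edge is forced, forbidden, or free across all DAGs realizing $R$. A convenient device is the transitive closure $G^{+}$ of $R$, the graph that contains the edge $V_a\to V_b$ whenever $V_a\,R\,V_b$; since $R$ is a strict partial order (the underlying graph is acyclic), $G^{+}$ is itself a DAG realizing $R$, and it is the edge-maximal such DAG.

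Case (1) is immediate: if $V_j$ is not a descendant of $V_i$, then an edge $V_i\to V_j$ would by itself witness $V_i\,R\,V_j$, so no DAG realizing $R$ can contain it, and the edge is correctly declared absent. For case (3) I would argue that the edge is forced in every realizing DAG. Take any DAG $G'$ with reachability $R$. Since $V_i\,R\,V_j$, there is a directed path $V_i=U_0\to U_1\to\cdots\to U_m=V_j$ in $G'$ with $m\ge 1$. If $m\ge 2$, then $U_1$ is a descendant of $V_i$ (via the edge $V_i\to U_1$) and an ancestor of $V_j$ (via $U_1\to\cdots\to V_j$), with $U_1\neq V_i,V_j$ by acyclicity; because reachability is identical in all realizing DAGs, this contradicts the hypothesis of (3) that $V_i$ has no intermediate descendant that is an ancestor of $V_j$. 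Hence $m=1$ and $G'$ contains the edge $V_i\to V_j$; being arbitrary, $G'$ shows the edge is present everywhere.

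Case (2) is where the work lies: here I must exhibit two realizing DAGs that disagree on the edge. One is $G^{+}$, which contains $V_i\to V_j$. For the other, delete this single edge from $G^{+}$ to obtain $G''$; deletion cannot create cycles, so $G''$ is a DAG, and I must check it still realizes $R$. The inclusion $G''\subseteq G^{+}$ gives one direction of reachability for free. For the reverse, the hypothesis supplies an intermediate $V_k$ with $V_i\,R\,V_k$, $V_k\,R\,V_j$, and $V_k\neq V_i,V_j$, so the edges $V_i\to V_k$ and $V_k\to V_j$ of $G^{+}$ are both distinct from the deleted edge and hence survive in $G''$; the two-step path $V_i\to V_k\to V_j$ restores $V_i\,R\,V_j$. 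More generally, every reachable pair of $G^{+}$ is witnessed by its own direct edge, which survives in $G''$ unless it is precisely $V_i\to V_j$, in which case the rerouting through $V_k$ serves instead; thus $G''$ and $G^{+}$ have identical reachability. Since they realize the same $R$ but differ on $V_i\to V_j$, the edge is genuinely undetermined.

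The main obstacle I anticipate is precisely this last verification in (2) — ensuring that removing the edge loses no reachability — and the cleanest way to discharge it is the observation that in the transitive closure every reachable pair already carries its own direct edge, so only the single pair $(V_i,V_j)$ could be threatened, and the intermediate $V_k$ repairs it. A secondary point to make explicit is that case (3) must be argued for \emph{all} realizing DAGs, not just the true one, which is why I run the path-length argument against the fixed relation $R$ rather than against a particular graph. Finally I would remark that the three cases are mutually exclusive and exhaustive over all pairs $(V_i,V_j)$ with $i\neq j$, so the procedure classifies every potential edge.
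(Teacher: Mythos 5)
Your proof is correct and follows essentially the same route as the paper: the same three-case split, with (1) and (3) argued by the identical observations (an edge forces descendance; a path of length at least two from $V_i$ to $V_j$ produces an intermediate vertex that is both a descendant of $V_i$ and an ancestor of $V_j$, contradicting the hypothesis stated in terms of the shared relation $R$), and (2) settled by exhibiting two DAGs with the same reachability relation that disagree on $V_i\to V_j$. The only difference is how (2) is packaged: the paper locally adds the edges $V_i\to V_k$ and $V_k\to V_j$ to the true GRN and then toggles $V_i\to V_j$, whereas you delete $V_i\to V_j$ from the transitive closure $G^{+}$ and exploit the fact that in $G^{+}$ every reachable pair carries its own direct edge — a slightly cleaner verification of the same rerouting idea.
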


\begin{proof}
	(1) If there is an edge $V_i\to V_j$, then $V_j$ is a descendant of $V_i$.
	
	(2) Since $V_i$ is an ancestor of $V_k$, we can add an edge $V_i\to V_k$ if it does not exist, and the ancestor-descendant relations are not affected. Since $V_k$ is an ancestor of $V_j$, we can add an edge $V_k\to V_j$ if it does not exist, and the ancestor-descendant relations are not affected. Since we have a path $V_i\to V_k \to V_j$, if there is an edge $V_i\to V_j$, we can delete it without affecting the ancestor-descendant relations. Since $V_i$ is an ancestor of $V_k$, if the edge $V_i\to V_k$ does not exist, we can add it, and the ancestor-descendant relations are not affected.
	
	(3) If there is no edge $V_i\to V_j$, then there is a path $V_i\to V_k\to \cdots\to V_j$. Thus $V_k$ is a descendant of $V_i$, and $V_j$ is a descendant of $V_k$, a contradiction.
\end{proof}

Consider $\mathcal{S}_1=\{V_1,\ldots,V_{n/2}\}$, $\mathcal{S}_2=\{V_{n/2+1},\ldots,V_n\}$. There is an edge from each vertex in $\mathcal{S}_1$ to each vertex in $\mathcal{S}_2$, and no other edge exists. Then all these $n^2/4$ edges can be identified. We guess this is the maximal number of edges that could be identified. When the GRN is a DAG, we can see that identified edges cannot form a triangle: if $V_i\to V_j$ and $V_j\to V_k$ can be identified, then $V_i\to V_k$ cannot be identified. Due to Tur\'an's Theorem \cite{G173}, a triangle-free graph with $n$ vertices can have at most $n^2/4$ edges. Therefore, in the DAG case, $n^2/4$ is the maximal number of edges that could be identified. 

If the GRN has cycles, we might recover no edge. For example, consider a GRN $V_1\to V_2\to V_3\to V_4 \to V_5$ with an edge $V_5\to V_1$, and a GRN $V_1\to V_3\to V_5\to V_2 \to V_4$ with an edge $V_4\to V_1$. In both GRNs, any vertex is a descendant of any vertex. Therefore, these two GRNs have the same ancestor-descendant relations, but share no edge.

\begin{proposition}
	If the GRN is a DAG with $n$ vertices, then we can use ancestor-descendant relations to identify at least $n-1$ edges. 
\end{proposition}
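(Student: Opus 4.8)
The plan is to pin down the set of identifiable edges exactly and then show that this set already forms a connected spanning subgraph. By Proposition 4(3), an edge $V_i\to V_j$ is identified precisely when $V_j$ is a descendant of $V_i$ and no vertex is simultaneously a descendant of $V_i$ and an ancestor of $V_j$; writing $V_i\prec V_j$ to mean that $V_j$ is a strict descendant of $V_i$, these are exactly the \emph{cover relations} of the ancestor--descendant partial order (pairs $V_i\prec V_j$ with no $w$ satisfying $V_i\prec w\prec V_j$). I would first observe that every cover relation is automatically a genuine edge of the GRN: a cover relation comes equipped with a directed path from $V_i$ to $V_j$, and if that path had length at least $2$ it would pass through an intermediate vertex $w$ with $V_i\prec w\prec V_j$, contradicting minimality. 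Hence ``identified edge'' and ``cover relation'' may be used interchangeably, and by Propositions 4(1),(3) each such edge is correctly determined to exist.

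Next I would collect all identified edges into a subgraph $H$ on the same $n$ vertices, and the crux is to prove that $H$ is (weakly) connected. To this end I would fix an arbitrary directed edge $V_a\to V_b$ of the GRN and produce a path between $V_a$ and $V_b$ lying entirely inside $H$. The device is to take a \emph{longest} directed path $V_a=u_0\to u_1\to\cdots\to u_m=V_b$ from $V_a$ to $V_b$; such a path is well defined and finite precisely because the GRN is acyclic. I claim each edge $u_\ell\to u_{\ell+1}$ along it is a cover relation, hence belongs to $H$: otherwise some vertex $w$ with $u_\ell\prec w\prec u_{\ell+1}$ could be spliced in, yielding a strictly longer directed path from $V_a$ to $V_b$, a contradiction. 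Thus $V_a$ and $V_b$ are joined by a path in $H$.

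Finally I would assemble the conclusion. Since the GRN is connected (the standing requirement on all GRNs in the graph-theory setup), its underlying undirected graph is connected; combined with the previous step, which links the two endpoints of every GRN edge within $H$, this forces $H$ to be a connected spanning subgraph on all $n$ vertices. A connected graph on $n$ vertices has at least $n-1$ edges, so at least $n-1$ edges are identified. The bound is tight, as the path $V_1\to V_2\to\cdots\to V_n$ has exactly $n-1$ cover edges.

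The main obstacle is precisely the connectivity step: knowing that identifiable edges exist is not enough, since a priori they could split into several components and fall short of $n-1$. The longest-path argument is what bridges an arbitrary, possibly non-identifiable, GRN edge to a chain of identifiable cover edges, and acyclicity is exactly what guarantees that a longest path exists and that inserting an intermediate vertex strictly increases length. Everything else reduces to the elementary edge count for a connected graph.
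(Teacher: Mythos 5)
Your proof is correct, and it takes a genuinely different route from the paper's. The paper argues by induction on the number of vertices: it picks a childless vertex $V_i$, deletes it so that the DAG falls into connected components $\mathcal{S}_1,\ldots,\mathcal{S}_m$, identifies in each component one edge into $V_i$ (the edge from that component's \emph{largest} parent of $V_i$ under a total order extending the DAG's partial order, which guarantees there is no other path from that parent to $V_i$), and then applies the induction hypothesis to each component, giving $\sum_j (n_j-1)+m=n-1$ identified edges in total. You instead characterize the identifiable edges exactly: they are the cover relations of the ancestor--descendant order, i.e.\ the transitive reduction (Hasse diagram) of the DAG, and you prove this cover graph $H$ is weakly connected by the longest-path splicing argument, so that $H$ is a connected spanning subgraph on $n$ vertices and hence has at least $n-1$ edges. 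Both arguments rest on the paper's standing assumption that the GRN is connected, without which the statement is false (an edgeless DAG has no identifiable edges); you use connectivity to transfer connectedness from the GRN to $H$, while the paper uses it to ensure every component $\mathcal{S}_j$ contains a parent of $V_i$. What your approach buys: a sharp structural description of \emph{which} edges are identified, a reusable lemma (the transitive reduction of a weakly connected DAG is weakly connected), and an immediate tightness example $V_1\to\cdots\to V_n$. What the paper's induction buys: it is more elementary and self-contained, avoiding both the cover-relation reformulation and the longest-path machinery, and its total-order device exhibits concrete identified edges explicitly. One cosmetic point: the criterion you cite as ``Proposition 4(3)'' is Proposition 2, parts (1) and (3), in the paper; the content you invoke is exactly right.
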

\begin{proof}
	The DAG is associated with a natural partial order: if there is an edge $V_i\to V_j$, then $V_i<V_j$. This partial order can be extended to a total order. If $V_i<V_j$ in the total order, then $V_i$ cannot be a descendant of $V_j$. 
	
	When $n=2$, this proposition is trivial. Assume this proposition is true for any $n<K$. When $n=K$, find a vertex $V_i$ which has no child. Assume that after deleting $V_i$ and edges linked to $V_i$, the DAG is divided into $m$ connected components $\mathcal{S}_1,\ldots,\mathcal{S}_m$. For the parents of $V_i$ in each $\mathcal{S}_j$, choose the largest one $V_j'$ under the total order. There is no other path from $V_j'$ to $V_i$. Thus the edge $V_j'\to V_i$ can be identified. With $m$ connected components, we can identify $m$ edges that lead to $V_i$. An edge that can be identified in each connected component should be able to be identified in the original DAG. Thus deleting $V_i$ does not affect identifying edges in each connected component. Apply this proposition to each connected component with size $n_j$, then we can identify at least $n_j-1$ edges. In sum, we can identify at least $\sum_j n_j-m+m=K-1$ edges.
\end{proof}

See Subsection \ref{82} and Appendix \ref{appadr} for the performance of this method on simulated data and corresponding discussions. 

\begin{remark}
	Assume we have identified an edge $V_i\to V_j$. This means no $V_k$ is a descendant of $V_i$ and an ancestor of $V_j$. Thus $V_i\to V_j$ is the only directed path from $V_i$ to $V_j$. When we intervene with $V_i$, so that the expression level of $V_i$ decreases, we can measure the expression level of $V_j$. If the expression level of $V_j$ decreases, then $V_i$ activates $V_j$. Otherwise, $V_i$ inhibits $V_j$.
\end{remark}

\subsection{Use conditional independence and ancestor-descendant relation to infer {GRN} structure}
\label{Sce2}
Assume the GRN is a DAG. We can measure the joint distribution of gene expression levels, and this distribution is Markov and faithful to the DAG. We can also intervene with genes to obtain the ancestor-descendant relations. With the conditional independence relations, we can find all edges of this DAG, but some edge directions cannot be determined. With the ancestor-descendant relations, we can determine the direction of each edge. Therefore, combining these two data types, we can reconstruct the full DAG.

See Subsection \ref{83} and Appendix \ref{appci} for the performance of this method on simulated data and corresponding discussions. 

\begin{remark}
	When the full DAG is inferred, we can use the same methods in Subsection \ref{DAG} to determine whether an edge $V_i\to V_j$ is activation or inhibition.
\end{remark}

\subsection{Use partially observed {ODE} to infer {GRN} structure with phenotype data}
\label{partial}
Consider an unknown GRN with phenotype $V_0$ and genes $V_1,\ldots,V_n$. Assume the GRN is a DAG. We can intervene with any genes, but we can only measure the phenotype on bulk level, not the genes. We assume there is a directed path from each $V_i$ to $V_0$, and there is no edge starting from $V_0$. Denote the levels of $V_0,V_1,\ldots,V_n$ by $x_0(t),x_1(t),\ldots,x_n(t)$, and assume they satisfy a linear ODE system. There is a partial order on $V_0,V_1,\ldots,V_n$, associated with edges in this DAG: if there is an edge $V_i\to V_j$, then stipulate that $V_i>V_j$. By the order-extension principle, we can extend it into a total order. We can reorder the genes by this total order, so that in the linear ODE system $\mathrm{d}\vec{x}/\mathrm{d}t=A\vec{x}+\vec{b}$, $a_{ij}>0$ means $i<j$, and $A$ is upper-triangular \cite{takeuti2013axiomatic}. This means there is no edge from $V_i$ to $V_j$ if $i<j$. The diagonal elements of $A$ are the degradation rates, which are negative. Therefore, the linear system is sign stable \cite{bone1988qualitative}, so that the unique fixed point $-A^{-1}\vec{b}$ is stable. 

Assume the system starts from the fixed point, and each time, we intervene with the value of one gene $V_i$. This means that $\vec{x}(0)=-A^{-1}\vec{b}+\vec{\delta}_i$, where $\vec{\delta}_i$ is a zero vector except its $i$th component. After the intervention, we can observe how $x_0(t)$ changes. Here we need to assume that the interventions added on different genes only change the initial values $x_i(0)$, but not system parameters $\{a_{ij}\}$ and $\{b_i\}$. We consider a question: based on only $x_0(t)$ under different interventions, can we partially infer the DAG? Surprisingly, the answer is yes (theoretically). We can determine whether a gene $V_i$ is an ancestor of another gene $V_j$.

Assume the diagonal elements of $A$ are different. Then the solution of $x_0$ is 
\[x_0(t)=c_{0}e^{a_{00}t}+c_{1}e^{a_{11}t}+\cdots+c_{n}e^{a_{nn}t}+d,\]
where the coefficients $c_i$ and $d$ depend on $A,\vec{b},\vec{x}(0)$. Assume in $V_0,V_1,\ldots,V_n$, the descendants of $V_i$ are $V_0,V_1,\ldots,V_{i-1}$. Since we start from the fixed point, except that $V_i$ is perturbed, those $V_j$ with $j>i$ shall be fixed, and the solution of $x_0$ is 
\[x_0(t)=c_{0}e^{a_{00}t}+c_{1}e^{a_{11}t}+\cdots+c_{i}e^{a_{ii}t}+d.\]
The coefficients $c_0,c_1,\ldots,c_i$ are all non-zero, unless the coefficients $a_{ij}$ satisfy an algebraic equation system. For example, consider the system 
\begin{equation*}
	\begin{split}
		\mathrm{d}x_0/\mathrm{d}t &=a_{00}x_0+a_{01}x_1+a_{02}x_2+b_0,\\
		\mathrm{d}x_1/\mathrm{d}t &=a_{11}x_1+a_{12}x_2+b_1,\\
		\mathrm{d}x_2/\mathrm{d}t &=a_{22}x_2+b_2.
	\end{split}
\end{equation*}
When we start from stationary and perturb $x_2$, the conditions for $x_0(t)=c_{0}e^{a_{00}t}+c_{1}e^{a_{11}t}+c_{2}e^{a_{22}t}+d$ with nonzero $c_0,c_1,c_2$ are (1) $a_{00}\ne a_{11}$; (2) $a_{00}\ne a_{22}$; (3) $a_{11}\ne a_{22}$; (4) $a_{01}a_{12}\ne 0$; (5) $a_{01}a_{12}+a_{02}a_{22}-a_{00}a_{02}\ne 0$; (6) $a_{01}a_{12}a_{22}-a_{01}a_{12}a_{11}-a_{02}a_{11}a_{22}+a_{02}a_{00}a_{11}+a_{02}a_{00}a_{22}-a_{02}a_{00}^2\ne 0$. In general, by Fubini-Tonelli theorem, matrix $A$ that does not satisfy such an algebraic equation system composes a zero-measured set in $\mathbb{R}^{(n+1)\times (n+1)}$ \cite{zhou2014multi}. Therefore, when we perturb $x_i(0)$, we can observe $x_0(t)$ as a linear combination of $i+1$ exponential functions. We can numerically determine the values of $a_{00},a_{11},\ldots,a_{ii}$. This means that when we perturb $x_i$, we can find $a_{00},a_{11},\ldots,a_{ii}$, which should correspond to $V_i$ and its descendants. The question is to find the correspondence between $a_{ii}$ and $V_i$. If a component $a_{nn}$ only appears after perturbing $x_n$, then we can make sure that $a_{n}$ corresponds to $V_n$. Then if $a_{jj}$ only appears after perturbing $x_n$ and $x_j$, then we know that $a_{jj}$ corresponds to $V_j$. Since there is a total order on $V_0,V_1,\ldots,V_n$, determined by the underlying DAG, we can determine the correspondence one by one. With the established correspondence, for each $V_i$, we can determine which genes are the descendants of $V_i$, by observing which components appear in $x_0(t)$. 

For example, consider one phenotype $V_0$ and three genes $V_1,V_2,V_3$, and we can measure $x_0(t)$ after perturbing any one gene. If we perturb $x_1$, $x_0(t)$ has two exponential components $\lambda_0,\lambda_1$. If we perturb $x_2$, $x_0(t)$ has two exponential components $\lambda_0,\lambda_2$. If we perturb $x_3$, $x_0(t)$ has four exponential components $\lambda_0,\lambda_1,\lambda_2,\lambda_3$. Since component $\lambda_3$ only appears after perturbing $x_3$, $\lambda_3$ corresponds to $V_3$. Since component $\lambda_2$ only appears after perturbing $x_2$ or $x_3$, and $V_3$ already corresponds to $\lambda_3$, we know that $\lambda_2$ corresponds to $V_2$. Since component $\lambda_1$ only appears after perturbing $x_1$ or $x_3$, and $V_3$ already corresponds to $\lambda_3$, we know that $\lambda_1$ corresponds to $V_1$. Finally, $\lambda_0$ corresponds to $V_0$. Therefore, we know that $V_1$ has descendant $V_0$; $V_2$ has descendant $V_0$; $V_3$ has descendants $V_0,V_1,V_2$. We can determine that the DAG has edges $V_3\to V_2$, $V_3\to V_1$, $V_2\to V_0$, $V_1\to V_0$, with a possible edge $V_3\to V_0$.

In sum, assume that we can observe the phenotype variable $x_0(t)$ under the intervention on each ancestor gene $V_i$ of $V_0$. Then we can determine whether one gene $V_i$ is an ancestor of another gene $V_j$. Using the same method in Subsection \ref{ADR}, this ancestor-descendant relation can partially determine the DAG. If the ODE system has nonlinear terms, the decomposition fails, and we cannot determine the perturbation on one gene is transmitted through other genes. Thus this method cannot work in nonlinear cases.

See Subsection \ref{84} and Appendix \ref{apppode} for the performance of this method on simulated data and corresponding discussions. This method is not numerically feasible, and not applicable in reality. 

\begin{remark}
	We can only observe the phenotype, not the genes, although we can intervene with genes. Consider a GRN $V_2\to V_1\to V_0$ with an edge $V_2\to V_0$. From the ancestor-descendant relations, we can only identify the edges $V_2\to V_1$ and $V_1\to V_0$. Assume $V_1$ activates $V_0$. If the level of $V_0$ decreases as the level of $V_2$ decreases (by an intervention), then we can explain that $V_2$ strongly activates $V_0$, and $V_2$ weakly activates or inhibits $V_1$. If the level of $V_0$ increases as the level of $V_2$ decreases (by an intervention), then we can explain that $V_2$ strongly inhibits $V_0$, and $V_2$ weakly activates or inhibits $V_1$. Therefore, we cannot determine whether $V_2\to V_1$ is activation or inhibition.
\end{remark}

\subsection{Use path blocking relation and partially observed {ODE} to infer {GRN} structure with phenotype data}
\label{LPD}
Consider an unknown GRN with genes $V_1,\ldots,V_n$ and a phenotype $V_0$. Assume the GRN is a DAG. We can intervene with any genes, but we can only measure the phenotype on bulk level, not the genes. We assume there is a directed path from each $V_i$ to $V_0$, and there is no edge starting from $V_0$. Assume the path blocking property holds, and the levels of $V_0,V_1,\ldots,V_n$ satisfy a linear ODE system. Using the method in Subsection \ref{PBP}, we can infer some edges. Using the method in Subsection \ref{partial}, we can also infer some edges. However, combining these two results might not fully recover the full GRN. For example, the two GRNs in Fig. \ref{ex16} have the same ancestor-descendant relations and path blocking relations. Thus we cannot determine whether $V_1\to V_3$ exists in this setting.

See Subsection \ref{85} and Appendix \ref{apppba} for the performance of this method on simulated data and corresponding discussions.

\begin{figure}
	\center
	$\xymatrix{
		V_1\ar[d]\ar[r]\ar[dr]&V_4\ar[r]&V_0&&V_1\ar[d]\ar[r]&V_4\ar[r]&V_0\\
		V_2\ar[r]&V_3\ar[u]&&&V_2\ar[r]&V_3\ar[u]&
	}$\\
	\caption{Two equivalent GRNs in the sense of ancestor-descendant relations and path blocking relations.}
	\label{ex16}
\end{figure}
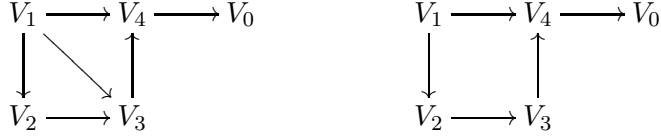

\begin{remark}
	For the two GRNs in Fig. \ref{ex16}, we can identify edges $V_1\to V_2$ and $V_2\to V_3$, but not the edge $V_1\to V_3$. Since $V_2\to V_3\to V_4\to V_0$ is the only directed path from $V_2$ to $V_0$, if we need to observe the influence of $V_2$ on $V_0$, or the influence of $V_1$ on $V_2$, we cannot intervene with $V_3$ or $V_4$ (otherwise, the path is blocked). Therefore, to determine the effect of $V_1$ on $V_2$, we can only interfere with $V_1$ or $V_2$. Similar to the discussions in Subsection \ref{partial}, we cannot determine $V_1\to V_2$ is activation or inhibition.
\end{remark}

\section{Implementations of novel methods}
\label{imp}
In this section, we implement our novel methods in Section \ref{novel}, and evaluate them on simulated data. Further discussions and details are in Appendix \ref{app}. The motivation of inventing novel methods is to fill the blanks of Table \ref{Tab}, namely those scenarios that are not thoroughly-studied. It is difficult to obtain data for such scenarios. Thus we only evaluate our methods on synthetic data. 

\subsection{Implementation of the path blocking method with genes}
\label{80}
In this subsection, we test the method introduced in Subsection \ref{PBG} on simulated data. To determine whether an edge $V_1\to V_2$ exists, we intervene with all other genes, and test whether perturbing $V_1$ has extra effect on the expression of $V_2$. Since all other genes are perturbed and useless, we can ignore them and focus on $V_1$ and $V_2$.

Since the system is simple enough, we propose an explicit mechanism for gene regulation and noise, and study the effect of noise on GRN inference. In each simulation, randomly generate the regulation coefficients for $V_1$ and $V_2$, and solve the level of $V_2$, $x_2$. Then intervene with $V_1$ and solve $\bar{x}_2$. We add independent noises on $x_2$ and $\bar{x}_2$ to obtain $x_2'$ and $\bar{x}_2'$, where the magnitude does no exceed $N$. We check whether the ratio $x_2'/\bar{x}_2'$ exceeds the range $[1-T,1/(1-T)]$, where $T$ is the threshold. If yes, then the edge $V_1\to V_2$ exists; otherwise, $V_1$ to $V_2$ is blocked by all other genes, and the edge $V_1\to V_2$ does not exist.

For different values of noise level $N$ and threshold $T$, we simulate $10^6$ times and calculate the percentage that the existence of edge $V_1\to V_2$ can be correctly inferred. See Fig. \ref{f81} for the contour map of the correct rate. The red curve in Fig. \ref{f81} indicates the optimal choice of threshold $T$ for each noise level $N$. See Appendix \ref{apppbg} for simulation details.

\begin{figure}[]
	\centering
	\includegraphics[width=5.4in]{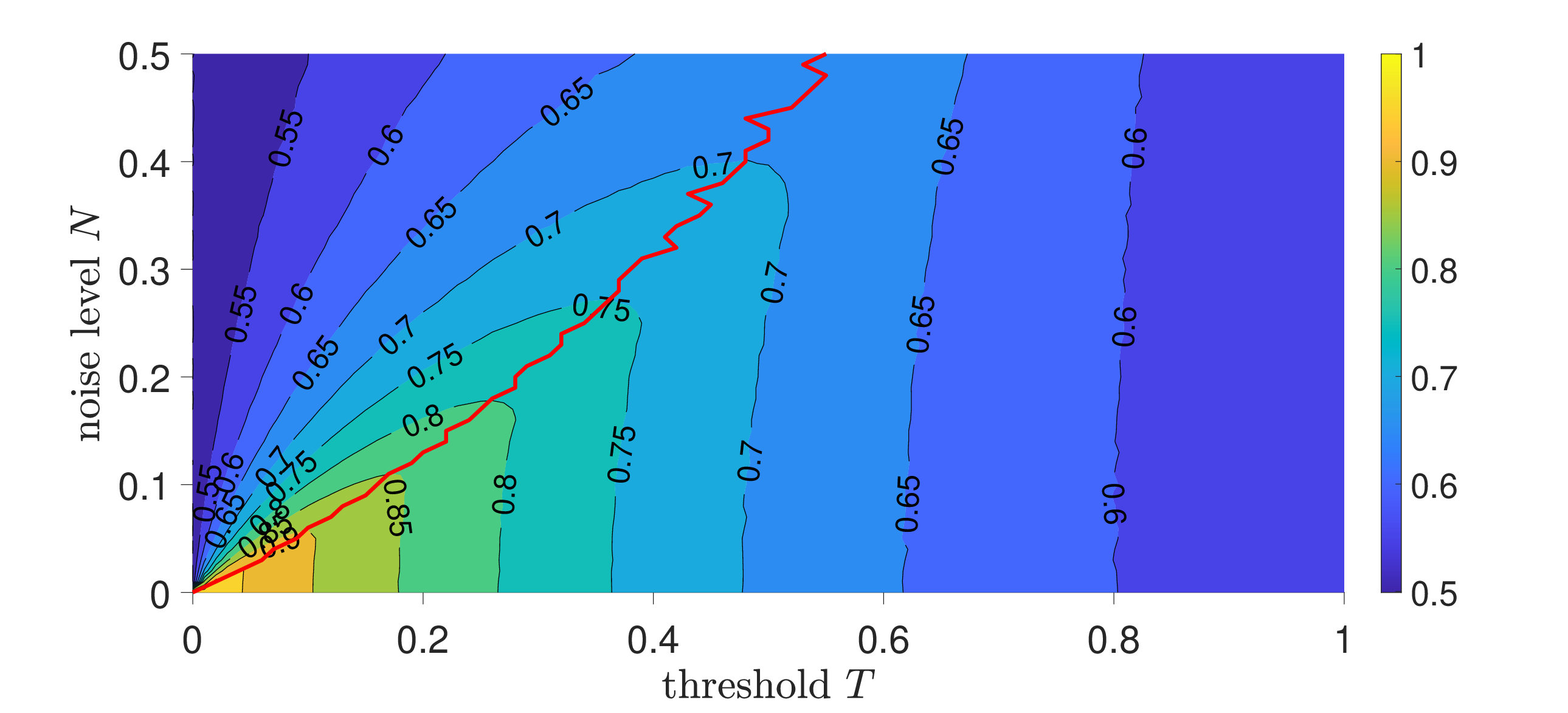}
	\caption{The contour plot for the correct rate of the path blocking method with genes (Subsection \ref{80}), with respect to different threshold $T$ ($x$-axis) and noise level $N$ ($y$-axis). The color represents the correct rate. The red curve indicates the optimal threshold for different values of noise level $N$.}
	\label{f81}
\end{figure}

\subsection{Implementation of the path blocking method with a phenotype}
\label{81}
In this subsection, we test the method introduced in Subsection \ref{PBP} on simulated data. Since different mechanisms for gene regulation and noise lead to different error rates in determining path blocking relations, we only study how the error in path blocking relations affect the inferred GRN.

In each simulation, randomly generate a GRN with three genes $V_1,V_2,V_3$ and one phenotype $V_0$. There is no edge from $V_0$ to $V_i$. Using the GRN, construct the path blocking relations from each gene $V_i$ to the phenotype $V_0$. Using Proposition \ref{pbpt}, we can partially reconstruct the GRN $\mathcal{G}$, where each edge $V_i\to V_j$ has three possibilities: $V_i\to V_j$ exists, $V_i\to V_j$ does not exist, the existence of $V_i\to V_j$ cannot be determined. 

In the above simulation, we add noise to each true path blocking relation: if $\mathcal{S}$ does not block $V_i$ to $V_0$, with probability $p$, we observe that $\mathcal{S}$ blocks $V_i$ to $V_0$; if $\mathcal{S}$ blocks $V_i$ to $V_0$, with probability $q$, we observe that $\mathcal{S}$ does not block $V_i$ to $V_0$. With the observed path blocking relations, we use Proposition \ref{pbpt} again to reconstruct the GRN $\mathcal{G}'$. 

Define $r_0$ to be the number of edges that do not exist in $\mathcal{G}$; define $r_1$ to be the number of edges that exist in $\mathcal{G}$; define $s_0$ to be the number of edges that do not exist in $\mathcal{G}'$; define $s_1$ to be the number of edges that exist in $\mathcal{G}'$; define $t_0$ to be the number of edges that exist in neither $\mathcal{G}$ nor $\mathcal{G}'$; define $t_1$ to be the number of edges that exist in both $\mathcal{G}$ and $\mathcal{G}'$. Edges that cannot be determined are ignored. Then we can define four performance measures: sensitivity $\text{SEN}=t_1/r_1$, specificity $\text{SPE}=t_0/r_0$, positive predictive value $\text{PPV}=t_1/s_1$, negative predictive value $\text{NPV}=t_0/s_0$. In the above example, $r_0=3,r_1=3,s_0=3,s_1=3,t_0=2,t_1=1$. Thus $\text{SEN}=1/3,\text{SPE}=2/3,\text{PPV}=1/3,\text{NPV}=2/3$.

The above four performance measures are functions of the error rates $p,q$. See Fig. \ref{f82} for the contour maps of these four measures, averaged over $10^5$ simulations. See Appendix \ref{apppbp} for simulation details.

\begin{figure}[]
	\centering
	\includegraphics[width=5.4in]{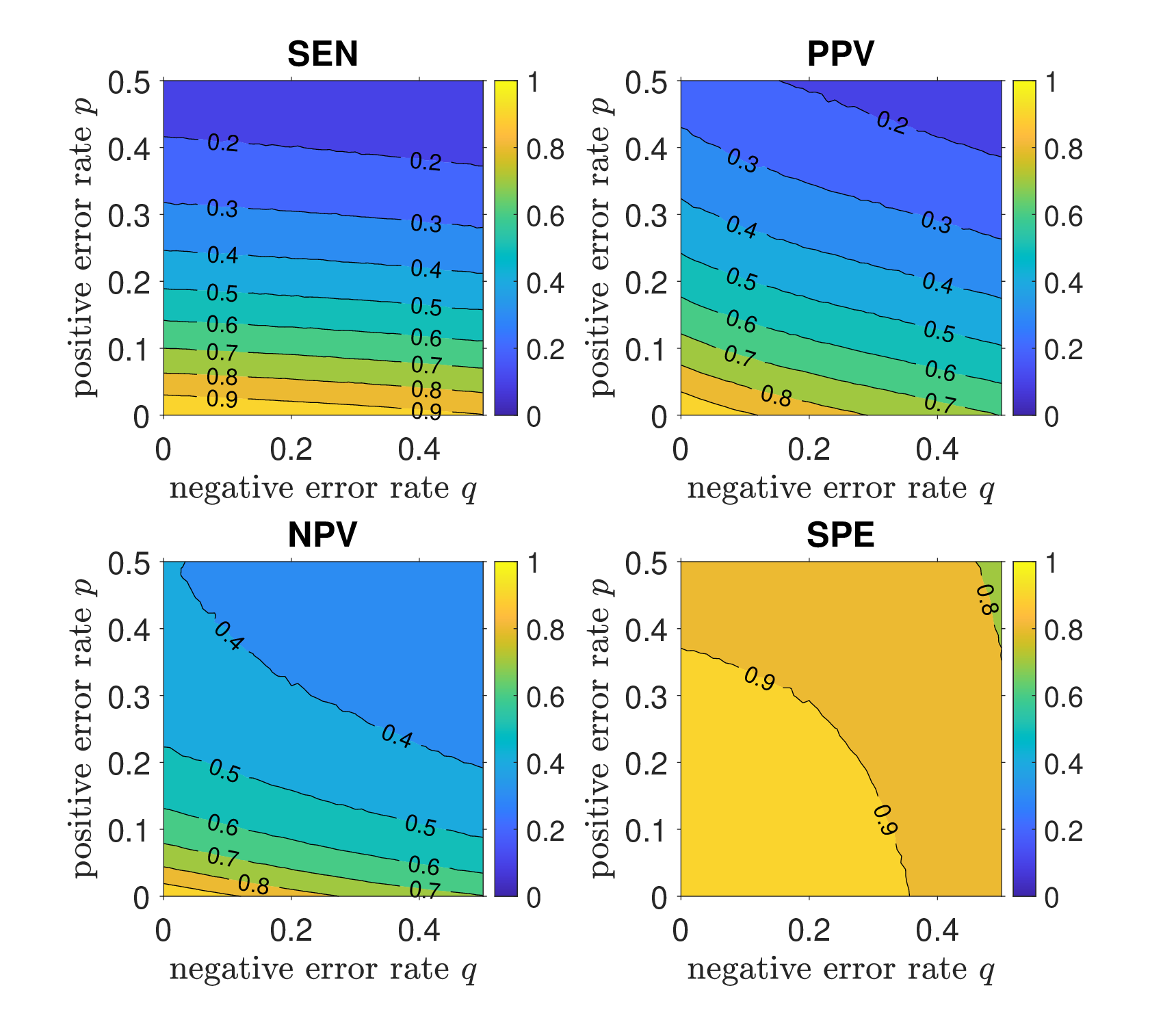}
	\caption{The contour plots for the four performance measures of the path blocking method with a phenotype (Subsection \ref{81}), with respect to different negative error rate $q$ ($x$-axis) and positive error rate $p$ ($y$-axis). SEN is sensitivity; PPV is positive predictive value; NPV is negative predictive value; SPE is specificity. The color represents the corresponding rate.}
	\label{f82}
\end{figure}

\subsection{Implementation of the ancestor-descendant relation method}
\label{82}
In this subsection, we test the method introduced in Subsection \ref{ADR} on simulated data. Similar to Subsection \ref{81}, we focus on the effect of errors in ancestor-descendant relations to the inference of GRN, not concrete regulation mechanisms.

In each simulation, randomly generate a GRN with 10 genes. The generating method guarantees that the GRN is a DAG with 10-30 edges. Using the DAG, we construct a $10\times 10$ matrix $A$ that describes the ancestor-descendant relations. If $V_i$ is an ancestor of $V_j$, then $A_{ij}=1$; otherwise, $A_{ij}=0$. The diagonal elements of $A$ are stipulated to be $-1$. By Proposition \ref{adrp}, we can use $A$ to partially infer the connectivity matrix $C$: if we can confirm the edge $V_i\to V_j$ exists, set $C_{ij}=1$; if we can confirm the edge $V_i\to V_j$ does not exist, set $C_{ij}=0$; if we cannot determine the existence of the edge $V_i\to V_j$, set $C_{ij}=0.5$. The diagonal elements of $C$ are stipulated to be $-1$. 

In the above simulation, we assume the observed ancestor-descendant relation matrix $A'$ is the true $A$ with some perturbations: if $A_{ij}=1$, with probability $p$, set $A'_{ij}=0$; if $A_{ij}=0$, with probability $q$, set $A'_{ij}=1$; otherwise, set $A'_{ij}=A_{ij}$. With the perturbed $A'$, we use Proposition \ref{adrp} to construct $C'$. 

Similar to Subsection \ref{81}, we define the four performance measures by comparing $C$ and $C'$: sensitivity SEN, specificity SPE, positive predictive value PPV, negative predictive value NPV. These four performance measures are functions of the error rates $p,q$. See Fig. \ref{f83} for the contour maps of these four measures, averaged over $10^4$ simulations. See Appendix \ref{appadr} for simulation details.

\begin{figure}[]
	\centering
	\includegraphics[width=5.4in]{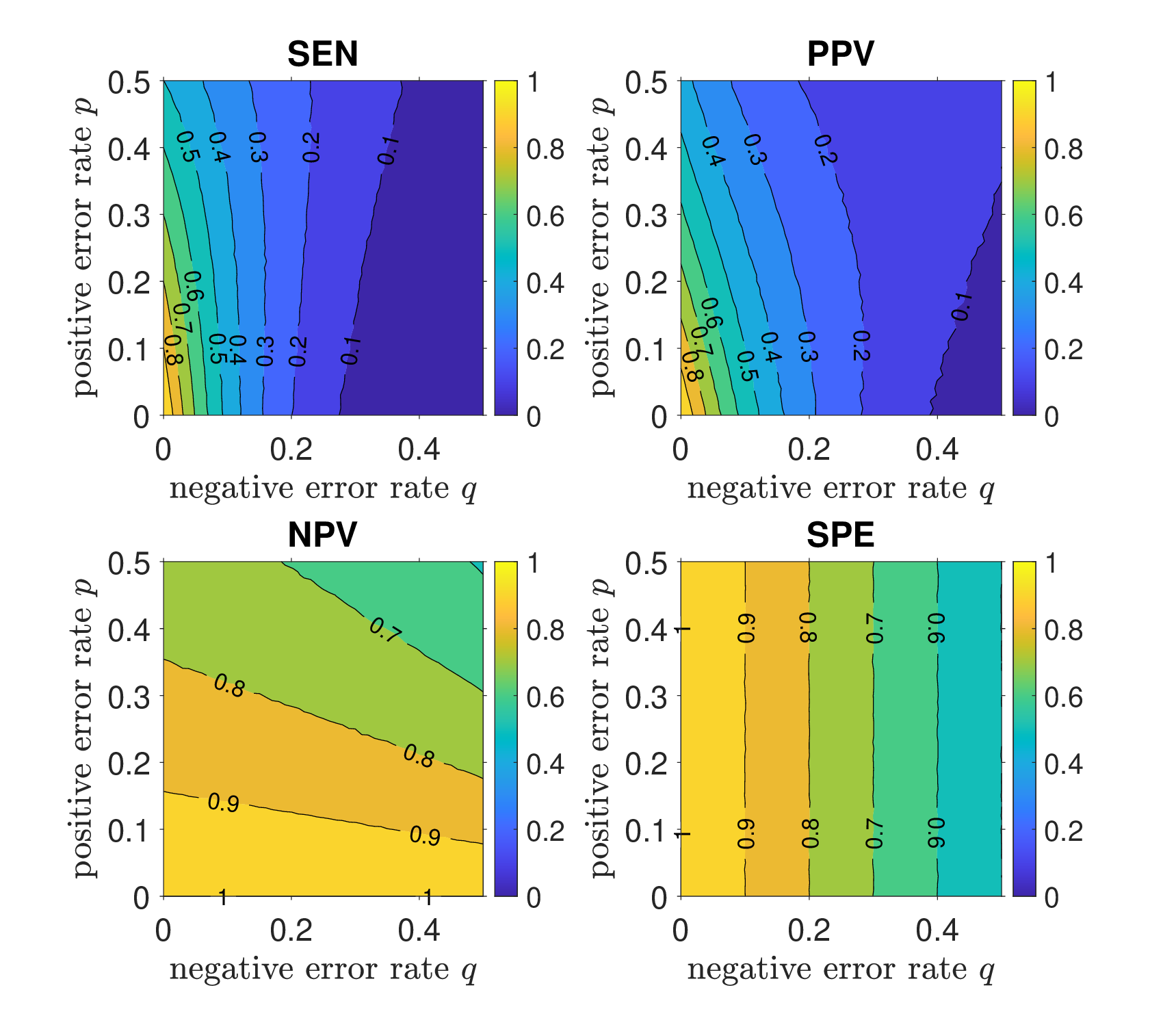}
	\caption{The contour plots for the four performance measures of the ancestor-descendant relation method (Subsection \ref{82}), with respect to different negative error rate $q$ ($x$-axis) and positive error rate $p$ ($y$-axis). SEN is sensitivity; PPV is positive predictive value; NPV is negative predictive value; SPE is specificity. The color represents the corresponding rate.}
	\label{f83}
\end{figure}

\subsection{Implementation of the conditional independence + ancestor-descendant method}
\label{83}
In this subsection, we test the method introduced in Subsection \ref{Sce2} on simulated data. Similar to Subsection \ref{81}, we focus on the effect of errors in conditional independence and ancestor-descendant relations to the inference of GRN, not concrete regulation mechanisms.

In each simulation, randomly generate a GRN with 3 genes, and make sure that it is a connected DAG. For the generated DAG, determine all the (conditional) independence relations and ancestor-descendant relations. 

Since we have interventional single-cell gene expression data, ``$V_i$ is an ancestor of $V_j$'' is equivalent to ``after perturbing $V_i$, $X_i$ and $X_j$ are dependent''. Thus both relations are essentially determined by independence tests, and we can assume that they have the same (positive or negative) error rate. In the above simulation, we assume each pair of (conditionally) independent variables has probability $q$ to be dependent in the observation, and each pair of (conditionally) dependent variables has probability $p$ to be independent in the observation. With the perturbed relations, we infer the DAG and compare with the true DAG. 

Similar to Subsection \ref{81}, we define the four performance measures: sensitivity SEN, specificity SPE, positive predictive value PPV, negative predictive value NPV. These four performance measures are functions of the error rates $p,q$. See Fig. \ref{f84} for the contour maps of these four measures, averaged over $10^5$ simulations. See Appendix \ref{appci} for simulation details.

\begin{figure}[]
	\centering
	\includegraphics[width=5.4in]{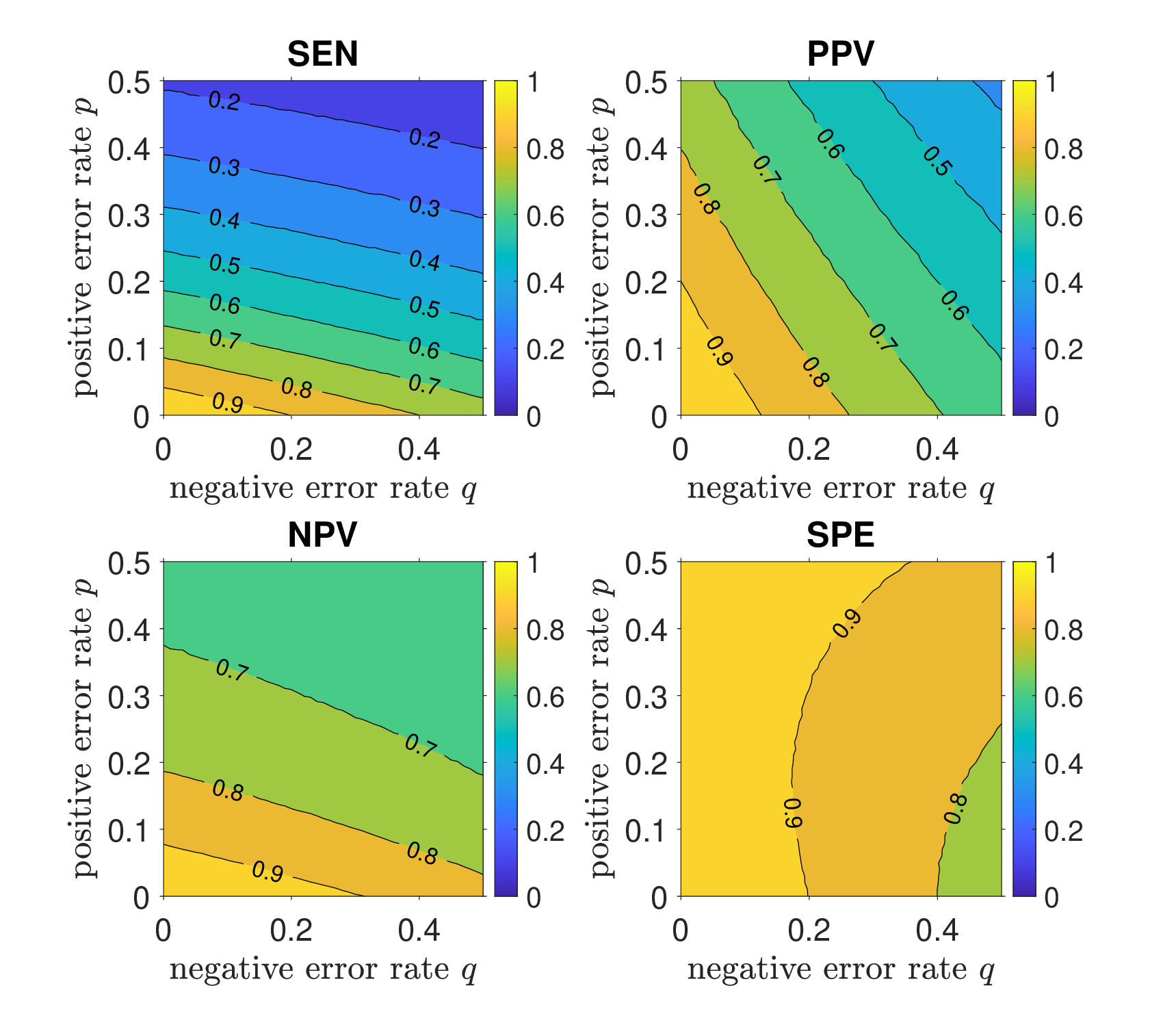}
	\caption{The contour plots for the four performance measures of the conditional independence + ancestor-descendant method (Subsection \ref{83}), with respect to different negative error rate $q$ ($x$-axis) and positive error rate $p$ ($y$-axis). SEN is sensitivity; PPV is positive predictive value; NPV is negative predictive value; SPE is specificity. The color represents the corresponding rate.}
	\label{f84}
\end{figure}

\subsection{Implementation of the partially observed ODE method}
\label{84}
In this subsection, we test the method introduced in Subsection \ref{partial} on simulated data. 

We start with a GRN with two genes $V_1,V_2$ and a phenotype $V_0$, whose levels $x_1(t)$, $x_2(t)$, $x_0(t)$ satisfy a linear ODE system. We start from stationary, and perturb $x_1$ or $x_2$, and observe $x_0(t)$. Since $x_0(t)=c_0e^{a_{00}t}+c_1e^{a_{11}t}+c_2e^{a_{22}t}+d$, we try to solve $a_{00},a_{11},a_{22}$ from the observed $x_0(t)$. When the observed $x_0(t)$ is accurate, we can barely solve $a_{00},a_{11},a_{22}$ numerically. When $x_0(t)$ is perturbed by a small noise (less than 1\%), $a_{00},a_{11},a_{22}$ already cannot be solved numerically. The reason is that $a_{00},a_{11},a_{22}$ satisfy a highly nonlinear equation system, and this system is sensitive to perturbations. Another approach is to apply the Laplace transform, which turns $c_0e^{a_{00}t}$ into $c_0/(s-a_{00})$. Although exponential functions are transformed into rational functions, it is still difficult to solve $a_{00},a_{11},a_{22}$ under a small noise.

Without $a_{00},a_{11},a_{22}$, we cannot determine the ancestor-descendant relations, and the GRN cannot be inferred. Therefore, this method is not applicable in reality, since it is numerically infeasible, and requires unreasonable accuracy for $x_0(t)$. Machine learning-based methods \cite{huang2020learning} might be able to save this idea. See Appendix \ref{apppode} for simulation details and further discussions.

\subsection{Implementation of the path blocking + ancestor-descendant method}
\label{85}
In this subsection, we test the method introduced in Subsection \ref{LPD} on simulated data. As discussed in Subsection \ref{84}, the partially observed ODE method fails to produce ancestor-descendant relations in reality. Nevertheless, we assume that we can determine the ancestor-descendant relations in another way (possibly with errors). Similar to Subsection \ref{81}, we focus on the effect of errors in path blocking and ancestor-descendant relations to the inference of GRN, not concrete regulation mechanisms.

In each simulation, randomly generate a DAG with three genes $V_1,V_2,V_3$ and one phenotype $V_0$. There is no edge from $V_0$ to $V_i$. Using the DAG, construct the path blocking relations from each gene $V_i$ to the phenotype $V_0$, and the ancestor-descendant relations for $V_i,V_j$. Use the path blocking relations (Proposition \ref{pbpt}) to infer the DAG $\mathcal{G}$. If some edges cannot be determined, use the ancestor-descendant relations (Proposition \ref{adrp}) to check if they can be determined. 

Notice that ``$V_i$ is an ancestor of $V_j$'' and ``$V_i$ to $V_0$ cannot be blocked'' both mean that ``the change of $V_i$ affects $V_j$''. Thus both relations are essentially determined by testing whether the change is significant, and we can assume that they have the same (positive or negative) error rate. In the above simulation, we assume each pair of ``no change'' variables (blocking or is not an ancestor) has probability $q$ to be inverted in the observation, and each pair of ``change'' variables (not blocking or is an ancestor) has probability $p$ to be inverted in the observation. With the perturbed relations, we infer the DAG $\mathcal{G}'$ and compare with $\mathcal{G}$. 

Similar to Subsection \ref{81}, we define the four performance measures: sensitivity SEN, specificity SPE, positive predictive value PPV, negative predictive value NPV. These four performance measures are functions of the error rates $p,q$. See Fig. \ref{f85} for the contour maps of these four measures, averaged over $10^5$ simulations. See Appendix \ref{apppba} for simulation details.

\begin{figure}[]
	\centering
	\includegraphics[width=5.4in]{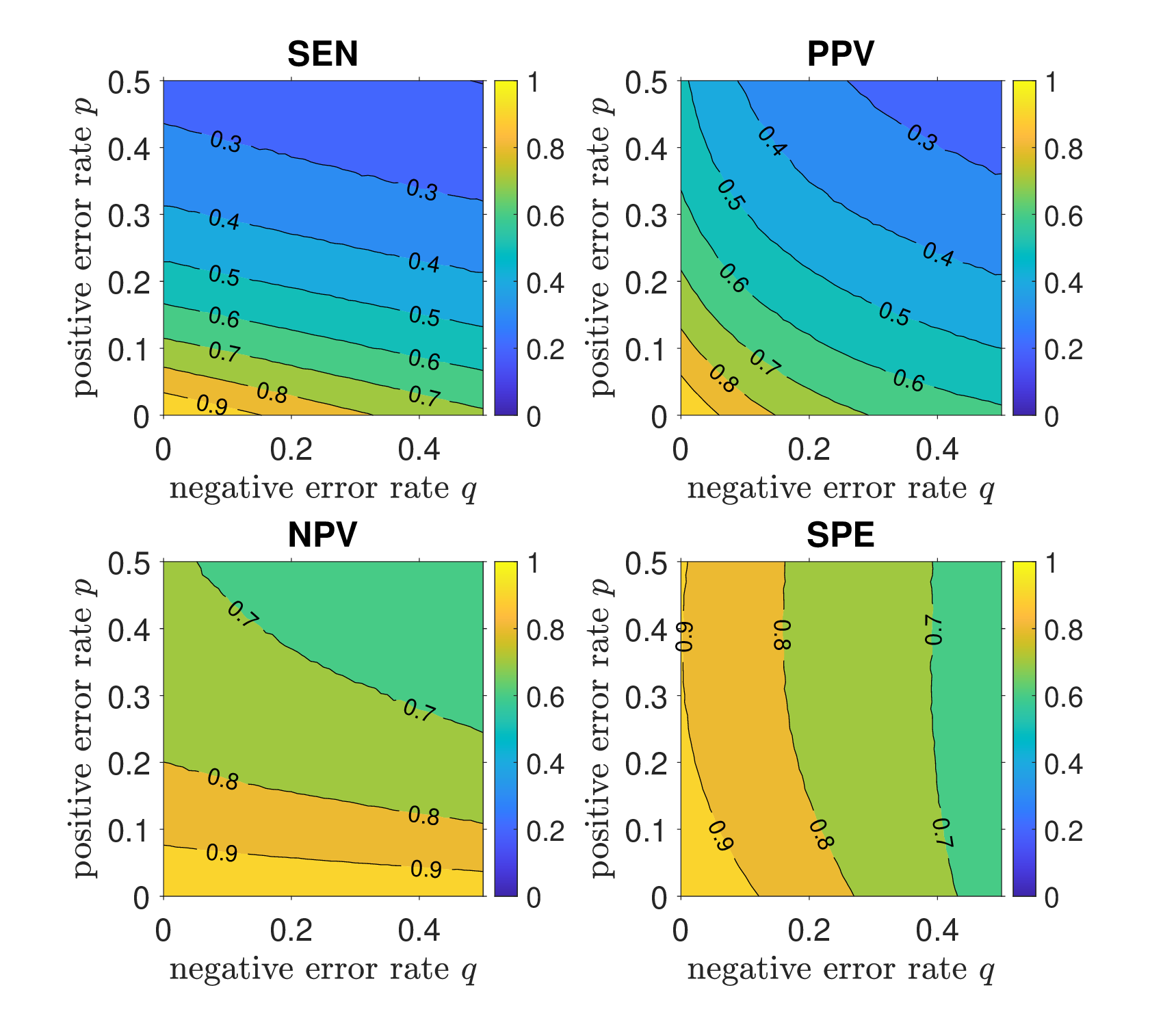}
	\caption{The contour plots for the four performance measures of the path blocking + ancestor-descendant method (Subsection \ref{85}), with respect to different negative error rate $q$ ($x$-axis) and positive error rate $p$ ($y$-axis). SEN is sensitivity; PPV is positive predictive value; NPV is negative predictive value; SPE is specificity. The color represents the corresponding rate.}
	\label{f85}
\end{figure}

\section{Different scenarios of inferring {GRN} structure}
\label{scenarios}
In this section, we discuss the question that with each possible data type, whether the GRN structure can be inferred. As introduced in Section \ref{biol}, each scenario (data type) is described in the following four dimensions: (1) Do we measure \textbf{Gene} expression or \textbf{Phenotype}? (2) Do we measure on \textbf{Single-cell} level or \textbf{Bulk} level? (3) Do we measure at a single time point (\textbf{One-time}) or at multiple time points (\textbf{Time series})? (4) Is the measurement \textbf{Interventional} or \textbf{Non-interventional}? When the measurement is on the single-cell level and at multiple time points, we need to discuss whether we can measure the \textbf{Joint} distribution or \textbf{Marginal} distributions for multiple time points.

For each scenario, there are different approaches to infer the GRN structure, corresponding to different models and different assumptions. We do not aim at exhausting all existing and potential approaches.

\subsection{Scenario 1: gene, single-cell, one-time, non-interventional}
As discussed in Subsection \ref{DAG}, if the GRN is a DAG, and the joint distribution is Markov and faithful to this DAG, then we can partially determine the GRN structure: all edges can be identified, except that the directions of some edges are unknown. This method has been applied in determining GRN structures \cite{emmert2012statistical}. Various algorithms utilize experimental data to determine conditional independence relations \cite{Huang,Kun}, and then infer DAG structures \cite{Ju}. There are also other approaches to model this scenario. In a network, we can use neighboring relations to predict the characters associated with each node \cite{wang2021inference}. Reversely, we can use the node characters (gene expression profiles) to infer the network structures. This approach can only provide a rough guess among different possibilities.

\subsection{Scenario 2: gene, single-cell, one-time, interventional}
Assume the GRN is a DAG, and the stationary joint distribution is Markov and faithful to this DAG. From the interventions, we can obtain the ancestor-descendant relations. As discussed in Subsection \ref{Sce2}, we can combine these two kinds of relations to fully reconstruct the DAG.  If the DAG condition holds, as discussed in Subsection \ref{ADR}, we can use ancestor-descendant relations to partially determine the GRN. If the path blocking property holds under the interventions, similar to Scenario 6, we can fully determine the GRN.

\subsection{Scenario 3: gene, single-cell, time series, non-interventional}

\textbf{Scenario 3a: Joint}. If we can measure the joint distribution for different time points, then as discussed in Subsection \ref{GSP}, we can fully reconstruct the GRN, even if there are cycles. This method has been applied in determining GRN structures \cite{yu2004advances}. There are some other approaches, such as Granger causality \cite{BBS09} and Boolean networks \cite{shmulevich2002gene}. Readers may refer to a review paper \cite{penfold2011infer} for different methods applicable for this scenario.

\textbf{Scenario 3b: Marginal}. If we can only measure the marginal distribution for each time point, then this is essentially the same as Scenario 1. If the GRN is a DAG, and the joint distribution is Markov and faithful to this DAG, then we can partially determine the GRN structure.

\subsection{Scenario 4: gene, single-cell, time series, interventional}
\textbf{Scenario 4a: Joint}. If we can measure the joint distribution for different time points, then similar to Scenario 3, we can fully reconstruct the GRN.

\textbf{Scenario 4b: Marginal}. If we can only measure the marginal distribution for each time point, there are different approaches. If we take expectations for expression profiles, this scenario moves to Scenario 8, and we can use the ODE model to fully recover the GRN under the linearity assumption. If the path blocking property holds under the interventions, similar to Scenario 6, we can use the blocking relations to fully recover the GRN. If the GRN is a DAG, and the stationary joint distribution is Markov and faithful to this DAG, similar to Scenario 2, we can use the conditional independence relations and the ancestor-descendant relations to fully recover the GRN. If the GRN is a DAG, similar to Scenario 6, we can use the ancestor-descendant relations to partially recover the GRN. It is also possible to directly fit the marginal distributions to a stochastic model \cite{gao2022data}.

\subsection{Scenario 5: gene, bulk, one-time, non-interventional}
In this scenario, we can only observe the bulk level expression profile. Since there is no intervention, and the expression level of each gene is just a deterministic number, we cannot infer the structure of GRN. 

\subsection{Scenario 6: gene, bulk, one-time, interventional}

If the path blocking property holds under the interventions, as discussed in Subsection \ref{PBG}, there is an edge $V_i\to V_j$ if and only if they cannot be blocked by all other vertices. This means that we can fully determine the GRN even if it has cycles. If the GRN is a DAG, we can obtain the ancestor-descendant relation, and use the method in Subsection \ref{ADR} to partially infer the GRN structure.

\subsection{Scenario 7: gene, bulk, time series, non-interventional}
Without intervention, the cell population is in equilibrium. Thus multiple observations at different time points should provide the same information. Scenario 7 is essentially the same as Scenario 5. We cannot infer the GRN structure.

\subsection{Scenario 8: gene, bulk, time series, interventional}
Assume the expression levels satisfy a linear ODE system. As discussed in Subsection \ref{ODE}, we can use the observed data to calculate the parameters, then determine the GRN structure. This method is extensively used \cite{bansal2006inference,wang2013integration}. If the ODE system has known nonlinear terms, one can solve the parameters similarly, although the data set needs to have a larger size. We can also treat this like Scenario 6. If the path blocking property holds, we can fully determine the GRN. If the DAG assumption holds, we can partially infer the GRN structure through ancestor-descendant relations.

\subsection{Scenario 9: phenotype, single-cell, one-time, non-interventional}
In this scenario, we can neither perturb a gene nor measure the expression level of a gene. Without information about genes in the data, we cannot infer the structure of GRN.

\subsection{Scenario 10: phenotype, single-cell, one-time, interventional}
We assume the path blocking property holds from any gene to the phenotype: the intervention on one gene $V_i$ does not change the phenotype $V_0$, if and only if all paths from $V_i$ to $V_0$ are blocked by intervened genes. With the path blocking relation, as discussed in Subsection \ref{PBP}, we can use the path blocking relation to partially determine the GRN.

\subsection{Scenario 11: phenotype, single-cell, time series, non-interventional}
\textbf{Scenario 11a/b: Joint/Marginal}. Similar to Scenario 9, there is no information about genes in the data, and we cannot infer the structure of GRN. In this scenario, whether we can measure the joint distribution for multiple time points does not change the conclusion.

\subsection{Scenario 12: phenotype, single-cell, time series, interventional}
\textbf{Scenario 12a/b: Joint/Marginal}. This scenario is almost the same as Scenario 16. The only difference is that we measure single-cell level phenotype data, which have higher accuracy in detecting differences. We can partially infer the GRN if (1) the path blocking property holds; or (2) the dynamics is linear, and the GRN is a DAG; or (3) the path blocking property holds, the dynamics is linear, and the GRN is a DAG. Notice that in (2) and (3), for some identified edges, we cannot determine whether the regulatory relation is activation or inhibition. In this scenario, whether we can measure the joint distribution for multiple time points does not change the conclusion.

\subsection{Scenario 13: phenotype, bulk, one-time, non-interventional}
Similar to Scenario 9, there is no information about genes in the data, and we cannot infer the structure of GRN.

\subsection{Scenario 14: phenotype, bulk, one-time, interventional}
This scenario is almost the same as Scenario 10. The only difference is that we measure bulk level phenotype data, which have lower accuracy in detecting differences. We can partially determine the GRN.

\subsection{Scenario 15: phenotype, bulk, time series, non-interventional}
Similar to Scenario 9, there is no information about genes in the data, and we cannot infer the structure of GRN.

\subsection{Scenario 16: phenotype, bulk, time series, interventional}
If the path blocking property holds, then as discussed in Subsection \ref{PBP}, we can partially determine the GRN structure. If the dynamics is linear, and the GRN is a DAG, then as discussed in Subsection \ref{partial}, we can determine the ancestor-descendant relations, and partially determine the GRN structure. Notice that for some identified edges, we cannot determine whether the regulatory relation is activation or inhibition. If the path blocking property holds, the dynamics is linear, and the GRN is a DAG, then as discussed in Subsection \ref{LPD}, we can combine the above two methods and partially infer the GRN structure. Notice that for some identified edges, we cannot determine whether the regulatory relation is activation or inhibition.

\section{Discussion and conclusions}
\label{disc}
In this paper, we introduce the problem of GRN structure inference with experimental data. Depending on the data type, the whole problem is classified into 20 scenarios. For each scenario, we apply some assumptions and transform it into a well-defined mathematical problem. Then we either introduce known inference methods or propose new inference methods and test them on simulated data. Nevertheless, all the assumptions involved oversimplify the complicated gene regulation process in reality, so that the inferred results are not fully reliable.

This paper does not cover all mathematical approaches to the GRN structure inference problem. For instance, we do not cover a recent trend for applying deep learning in GRN structure inference \cite{yang2019predicting,zrimec2020deep,turki2021discriminating,shrivastava2022grnular,zheng2022accurate}. As machine learning and other related fields flourish, new mathematical methods for scenarios discussed in this paper will appear. With the development of biological technologies, there will be new data types that are not included in our 20 scenarios. This work just provides a general paradigm of inferring GRN (or general networks) structures: given what type of data, under what assumptions, what we can infer about the GRN structure. 

We mainly focus on theoretical inference methods, not on practical algorithms that work on real data. There is a gap between theory and algorithm, since different implementation approaches might introduce different levels of errors and have different efficiencies \cite{wang2020causal}. The designing of corresponding algorithms is related to some mathematical fields not covered in this paper: statistical inference \cite{casella2021statistical}, numerical computing \cite{pozrikidis1998numerical}, etc.

To derive ODE models, we assume there are infinitely many identical cells living in a stationary environment, and the underlying dynamics is time-homogeneous. These assumptions are not always true: we only have finitely many cells; cells have heterogeneity; the environment keeps changing; cells are mutating over time. Besides, this derivation might fail in nonlinear cases. For a nonlinear function $f$ and a random variable $X$, the expectation generally cannot cross $f$: $\mathbb{E}[f(X)]\ne f(\mathbb{E}X)$. Therefore, the ODE-based models cannot match the reality with extremely high accuracy, although some ODE-based models (such as the method in Subsection \ref{partial}) are numerically sensitive to perturbations.

When there are not enough data, one solution is to produce more data through interpolation \cite{bansal2006inference}. With this method, the dynamics of gene regulation is already stipulated by the data interpolation method. The new data can only be used to confirm that their dynamics follow the interpolation method. Therefore, this approach is essentially equivalent to adding more assumptions (e.g., linearity) into the model. We need to be cautious not to regard assumptions as conclusions.

In Scenario 5 and Scenario 7, we have the bulk level gene expression profiles without interventions. A common data type is to repeat the experiment on different populations of the same cell type, such as a portion of the DREAM challenge data \cite{prill2010towards} and the M3D database \cite{faith2007many}. Even without intervention, the expression levels of the same gene in different experiments (different populations) might differ. One might propose an approach that we can regard different expression levels of the same gene as samples from a random variable. Then the method in Section \ref{DAG} can be adopted. However, on bulk level, stochasticity has been averaged out, and the fluctuations in different experiments must come from different values of some unobserved variables (e.g., phenotype, environment). We provide a simple example to show that this approach might build a false relation between two independent genes. One hidden variable might affect multiple genes. Assume there is no edge between $V_1,V_2$, and they are both regulated by a hidden variable $H$ (confounder). We can observe that the expression levels of $V_1,V_2$ are always dependent, conditioned on any observable gene expression levels. Therefore, we obtain the false conclusion that there is an edge between $V_1$ and $V_2$. 

In Scenario 6 and Scenario 8, we have the bulk level gene expression profiles after different interventions. One approach of determining possible gene regulations is to directly calculate the correlation coefficient or mutual information of the expression levels of two genes \cite{hurley2012gene}. Here data after different interventions are regarded as samples from the same distribution. Assume there are three independent genes $V_1,V_2,V_3$. After intervention on $V_1$, the expression levels of $V_1,V_2$ are $0.5$ and $1$; after intervention on $V_2$, the expression levels of $V_1,V_2$ are $1$ and $0.5$; after intervention on $V_3$, the expression levels of $V_1,V_2$ are $1$ and $1$. Then the correlation coefficient of $V_1,V_2$ is $-0.5$, although $V_1,V_2$ are independent.

When the number of genes in the GRN increases, for various methods in this paper, required amount of experimental data and computation time also increase (might be exponentially fast). To avoid such problems, we need to assume the GRN is sparse, or just consider a few genes. Nevertheless, such solutions might lead to unreliable results, since GRNs in reality might be large and dense \cite{bansal2006inference}.

\section*{Acknowledgments}
Y.W. would like to thank Dr. Nadya Morozova and Dr. Andrey Minarsky for helpful discussions. This research was partially supported by NIH grant R01HL146552.

\appendix

\section{Discussions and details of simulation}
\label{app}
In this appendix, we describe the simulation details in Section \ref{imp}, and discuss the simulation results.

\subsection{Path blocking method with genes}
\label{apppbg}

In the equation system $\begin{bmatrix}
	-1 &    a_{12} \\
	a_{21}  &   -1  
\end{bmatrix}
\begin{bmatrix}
	x_1  \\
	x_2  
\end{bmatrix}=
-\begin{bmatrix}
	b_1 \\
	b_2 
\end{bmatrix}$, $b_1,b_2$ are uniform random numbers in $[0,1]$. Generate a random number $P$, which is uniform in $[0,1]$. With probability $1-P$, $a_{12}=0$, and with probability $P$, $a_{12}$ is a uniform random number in $[-1,1]$. The same applies to $a_{21}$. Solve $x_2$ and add noise: $x_2'=x_2(1+\epsilon)$, where $\epsilon$ is uniform in $[-N,N]$. Similarly, set $a_{21}=0$ and solve $\bar{x}_2$, then add noise: $\bar{x}_2'=\bar{x}(1+\epsilon')$, where $\epsilon'$ is independent and uniform in $[-N,N]$. To test the path blocking property, we check whether $1-T\le x_2'/\bar{x}_2'\le 1/(1-T)$. If this inequality holds, we infer that $V_1$ to $V_2$ can be blocked, and there is no edge $V_1\to V_2$. For each noise level $N\in[0,0.5]$ and threshold $T\in [0,1]$, the simulation is repeated for $10^6$ times.

\subsection{Path blocking method with a phenotype}
\label{apppbp}

For each $p\in[0,0.5]$ and $q\in[0,0.5]$, repeat the following simulation for $10^5$ times, and average the performance measures. In each simulation, generate a random number $R$ which is uniform in $[0,1]$. Generate a GRN, where each edge has probability $R$ to exist. Here an edge from the phenotype $V_0$ to a gene $V_i$ is not allowed. Use the GRN to generate path blocking relations from each $V_i$ to $V_0$, and apply Proposition \ref{pbpt} to obtain the inferred GRN $\mathcal{G}$. Perturb the path blocking relations: with probability $q$, switch blocking to not blocking; with probability $p$, switch not blocking to blocking. Apply Proposition \ref{pbpt} to obtain the inferred GRN $\mathcal{G}'$, and compare with $\mathcal{G}$ to calculate the four performance measures.

\subsection{Ancestor-descendant relation method}
\label{appadr}

For each $p\in[0,0.5]$ and $q\in[0,0.5]$, repeat the following simulation for $10^4$ times, and average the performance measures. In each simulation, generate a random integer $R$ which is uniform in $[10,30]$. Generate a GRN with $10$ genes and $R$ edges while guaranteeing it is a DAG. Use the DAG to generate ancestor-descendant relations, and apply Proposition \ref{adrp} to obtain the inferred GRN $\mathcal{G}$. Perturb the ancestor-descendant relations: with probability $p$, switch ``is a descendant'' to ``is not a descendant''; with probability $q$, switch ``is not a descendant'' to ``is a descendant''. Apply Proposition \ref{adrp} to obtain the inferred GRN $\mathcal{G}'$, and compare with $\mathcal{G}$ to calculate the four performance measures.

\subsection{Conditional independence + ancestor-descendant method}
\label{appci}

For each $p\in[0,0.5]$ and $q\in[0,0.5]$, repeat the following simulation for $10^5$ times, and average the performance measures. In each simulation, generate a random connected DAG with 3 genes. Since we require the GRN is a connected DAG, there are only four topologically different types: $\xymatrix{
	V_1\ar[rd]\ar[r]&V_2\ar[d]\\
	&V_3
}$, 
$\xymatrix{
	V_1\ar[r]&V_2\ar[d]\\
	&V_3
}$, 
$\xymatrix{
	V_1\ar[rd]&V_2\ar[d]\\
	&V_3
}$, 
$\xymatrix{
	V_1\ar[rd]\ar[r]&V_2\\
	&V_3
}$. For the generated DAG, determine the conditional independence relations and ancestor-descendant relations. Then we perturb the relations: with probability $q$, switch ``independent'' or ``is not an ancestor'' to ``dependent'' or ``is an ancestor''; with probability $p$, switch ``dependent'' or ``is an ancestor'' to ``independent'' or ``is not an ancestor''. With the perturbed conditional independence relations, we determine whether there is an edge between two vertices, and then we use the perturbed ancestor-descendant relations to determine the edge direction. If we determine there is an edge between $V_i$ and $V_j$, but the ancestor-descendant relations are contradicted (both $V_i$ and $V_j$ are an ancestor of the other, or neither $V_i$ nor $V_j$ is an ancestor of the other), we randomly assign a direction with equal probability. With this method, we obtain the inferred DAG, and compare with the true DAG to calculate the four performance measures.

\subsection{Partially observed ODE method}
\label{apppode}
We start from the linear ODE system 
\begin{equation*}
	\begin{split}
		\mathrm{d}x_2/\mathrm{d}t &=-3x_2+1,\\
		\mathrm{d}x_1/\mathrm{d}t &=x_2-2x_1+1,\\
		\mathrm{d}x_0/\mathrm{d}t &=2x_2+x_1-x_0+1,
	\end{split}
\end{equation*}
whose stationary state is $[x_2,x_1,x_0]=[1/3,2/3,7/3]$. If the system starts from $[4/3,2/3,7/3]$, then 
\[x_0(t)=\frac{3}{2}e^{-t}-e^{-2t}-\frac{1}{2}e^{-3t}+\frac{7}{3}.\]
In the equation $x_0(t)=c_0e^{a_{00}t}+c_1e^{a_{11}t}+c_2e^{a_{22}t}+7/3$, set $t=1,2,3,4,5,6$, and solve $a_{00},a_{11},a_{22}$ with the fsolve function in Matlab. When the true $x_0(t)$ is replaced by $(1+\epsilon)x_0(t)$, where $\epsilon$ is a random noise that is uniform on $[-0.01,0.01]$, the solver fails to find a solution. We repeat this procedure after applying the Laplace transform on $x_0(t)$, and the solver still fails. The problem of solving $a_{00},a_{11},a_{22}$ is essentially sensitive to perturbations. In reality, even with a small noise, this method fails.

\subsection{Path blocking + ancestor-descendant method}
\label{apppba}

For each $p\in[0,0.5]$ and $q\in[0,0.5]$, repeat the following simulation for $10^5$ times, and average the performance measures. In each simulation, generate a random number $R$ which is uniform in $[0,1]$. Generate a GRN, where each edge has probability $R$ to exist. Here an edge from the phenotype $V_0$ to a gene $V_i$ is not allowed. Make sure the GRN is a DAG. Use the DAG to generate path blocking relations from each $V_i$ to $V_0$, and ancestor-descendant relations between $V_i$ and $V_j$. First apply Proposition \ref{pbpt} to obtain the inferred DAG $\mathcal{G}$. Edges that cannot be determined will be checked by Proposition \ref{adrp}. Perturb the path blocking relations and the ancestor-descendant relations: with probability $q$, switch blocking to not blocking, or switch ``is not an ancestor'' to ``is an ancestor''; with probability $p$, switch not blocking to blocking, or switch ``is an ancestor'' to ``is not an ancestor''. Apply Proposition \ref{pbpt} to obtain the inferred GRN $\mathcal{G}'$, Edges that cannot be determined will be checked by Proposition \ref{adrp}. If Proposition \ref{pbpt} and Proposition \ref{adrp} have opposite results, rely on Proposition \ref{pbpt}. Compare $\mathcal{G}$ and $\mathcal{G}'$ to calculate the four performance measures.

\bibliographystyle{vancouver}
\bibliography{GRN}

\end{document}